\documentclass[11pt]{amsart}
\usepackage{times,amssymb,amsmath,float,euscript,mathpazo,nicefrac,
pstricks,color,mathrsfs}
\usepackage[dvips]{epsfig}

\topmargin -.2in
\headheight .2in
     
\textheight 8.95in
     
\oddsidemargin 0pt
\evensidemargin \oddsidemargin
\marginparwidth 0.5in
    
\textwidth 6.5in     
\newtheorem{theorem}{Theorem}
\newtheorem{lemma}[theorem]{Lemma}

\newtheorem{corollary}[theorem]{Corollary}
\newtheorem{definition}{Definition}

\renewenvironment{proof}[1]{\noindent {\it Proof~:} #1}
{\ \rule{1mm}{2mm}\medskip}

\newcommand{\remove}[1]{}

\newcommand\wt{\mbox{{\rm wt}\,}}
 
                             % I

\def\wt{\qopname\relax{no}{w}}

%\stackrel{\text{\tiny\rm exp}}{\sim}}
\newcommand\nc\newcommand
\nc\bfa{{\boldsymbol a}}\nc\bfA{{\mathbf A}}\nc\cA{{\mathcal A}}
\nc\bfb{{\boldsymbol b}}\nc\bfB{{\mathbf B}}\nc\cB{{\mathcal B}}
\nc\bfc{{\boldsymbol c}}\nc\bfC{{\mathbf C}}\nc\cC{{\EuScript C}}
\nc\bfd{{\boldsymbol d}}\nc\bfD{{\mathbf D}}\nc\cD{{\mathcal D}}\nc\sD{{\mathscr D}}
\nc\bfe{{\boldsymbol e}}\nc\bfE{{\mathbf E}}\nc\cE{{\EuScript E}}
\nc\bff{{\boldsymbol f}}\nc\bfF{{\mathbf F}}\nc\cF{{\mathcal F}}
\nc\bfg{{\boldsymbol g}}\nc\bfG{{\mathbf G}}\nc\cG{{\mathcal G}}
\nc\bfh{{\boldsymbol h}}\nc\bfH{{\mathbf H}}\nc\cH{{\mathcal H}}
\nc\bfi{{\boldsymbol i}}\nc\bfI{{\mathbf I}}\nc\cI{{\mathcal I}}
\nc\bfj{{\boldsymbol j}}\nc\bfJ{{\mathbf J}}\nc\cJ{{\mathcal J}}
\nc\bfk{{\boldsymbol k}}\nc\bfK{{\mathbf K}}\nc\cK{{\mathcal K}}
\nc\bfl{{\boldsymbol l}}\nc\bfL{{\mathbf L}}\nc\cL{{\mathcal L}}
\nc\bfm{{\boldsymbol m}}\nc\bfM{{\mathbf M}}\nc\cM{{\mathcal M}}
\nc\bfn{{\boldsymbol n}}\nc\bfN{{\mathbf N}}\nc\cN{{\mathcal N}}
\nc\bfo{{\boldsymbol o}}\nc\bfO{{\mathbf O}}\nc\cO{{\mathcal O}}
\nc\bfp{{\boldsymbol p}}\nc\bfP{{\mathbf P}}\nc\cP{{\mathcal P}}
\nc\bfq{{\boldsymbol q}}\nc\bfQ{{\mathbf Q}}\nc\cQ{{\mathcal Q}}
\nc\bfr{{\boldsymbol r}}\nc\bfR{{\mathbf R}}\nc\cR{{\mathcal R}}
\nc\bfs{{\boldsymbol s}}\nc\bfS{{\mathbf S}}\nc\cS{{\mathcal S}}
\nc\bft{{\boldsymbol t}}\nc\bfT{{\mathbf T}}\nc\cT{{\mathcal T}}\nc\sT{{\mathscr T}}
\nc\bfu{{\boldsymbol u}}\nc\bfU{{\mathbf U}}\nc\cU{{\mathcal U}}
\nc\bfv{{\boldsymbol v}}\nc\bfV{{\mathbf V}}\nc\cV{{\mathcal V}}
\nc\bfw{{\boldsymbol w}}\nc\bfW{{\mathbf W}}\nc\cW{{\mathcal W}}\nc\sW{{\mathscr W}}
\nc\bfx{{\boldsymbol x}}\nc\bfX{{\mathbf Z}}\nc\cX{{\EuScript X}}
\nc\bfy{{\boldsymbol y}}\nc\bfY{{\mathbf Y}}\nc\cY{{\EuScript Y}}\nc\sY{{\mathscr Y}}
\nc\bfz{{\boldsymbol z}}\nc\bfZ{{\mathbf Z}}\nc\cZ{{\mathcal Z}}\nc\sZ{{\mathscr Z}}
\nc\od{{\bar d}}\nc\ow{{\bar w}}\nc\odelta{{\bar\delta}}
\nc\ox{{\bar x}}\nc\oy{{\bar y}}\nc\ou{{\bar u}}
\nc\oh{{\bar h}}

\newcommand\ee{{\mathbb E}}

\nc\dgv{\delta_{\text{\rm GV}}}
\nc\rcrit{R_{\text{\rm crit}}}
\nc\Esp{E_{\text{\rm sp}}}
\renewcommand\epsilon{\varepsilon}

\newcommand{\beeq}{\begin{eqnarray*}}
\newcommand{\eneq}{\end{eqnarray*}}
\renewcommand{\le}{\leqslant}
\renewcommand{\leq}{\leqslant}
\renewcommand{\ge}{\geqslant}
\renewcommand{\geq}{\geqslant}

\DeclareMathAlphabet{\mathbfsl}{OML}{zpple}{b}{it} 
\newcommand{\bfsl}{\bfseries\slshape}

\begin{document}
\title[Codes on hypergraphs]
{Weight distribution and decoding of codes on hypergraphs} 
\author[A. Barg]{Alexander Barg$^\ast$}
\author[A. Mazumdar]{Arya Mazumdar$^\S$}
\author[G. Z\'emor]{Gilles Z\'emor$^\dag$}
\thanks{$^\ast$ Dept. of ECE and Institute for Systems Research, 
University of Maryland, College Park, MD 20742 and Institute for 
Problems of Information Transmission, Moscow, Russia, Email: abarg@umd.edu.
Research supported in part by NSF grants
CCF0515124 and CCF0635271 and by NSA grant H98230-06-1-0044.}
\thanks{$^\S$ Dept. of ECE, University of Maryland, College Park, MD 20742, Email: arya@umd.edu. Research supported in part by NSF grant CCF0635271.}
\thanks{$^\dag$ Institut de Math\'ematiques de Bordeaux,
Universit\'e de Bordeaux 1,
351 cours de la Lib\'eration,
33405 Talence, France, Email: gilles.zemor@math.u-bordeaux1.fr}

\begin{abstract}
Codes on hypergraphs are an extension of the well-studied family of codes
on bipartite graphs. 
Bilu and Hoory (2004) constructed an explicit family of codes on regular 
$t$-partite hypergraphs whose minimum distance improves 
earlier estimates of the distance of bipartite-graph codes. 
They also suggested a decoding algorithm for such codes and estimated 
its error-correcting capability.

In this paper we study two aspects of hypergraph codes. First, we compute the
weight enumerators of several ensembles of such codes, establishing conditions
under which they attain the Gilbert-Varshamov bound and deriving estimates
of their distance. In particular, we show that this bound is attained by codes
constructed on a fixed bipartite graph with a large spectral gap.

We also suggest a new decoding algorithm of hypergraph codes that corrects a 
constant fraction of errors, improving upon the algorithm of Bilu and Hoory. 
\end{abstract}
\vskip-1cm 
\maketitle

%\nd {\bf Index Terms:} 

%\vspace*{-1cm}

\section{Introduction}
Codes on graphs account for some of the best known code families in terms 
of their error correction under low-complexity decoding algorithms. 
They are also known to achieve a very good tradeoff between the 
rate and relative distance.
The most well-studied case is codes defined on a bipartite graph.
In this construction, a code of length $N=mn$ is obtained by ``parallel
concatenation'' of $2m$ codes of a small length $n$ which refers to
the fact that each bit of the codeword is checked by two independent length-$n$
codes. The arrangement of parity checks is specified by the edges of
a bipartite graph which are in one-to-one correspondence with the codeword bits.

Codes on bipartite graphs are known to be asymptotically
good, i.e., to have nonvanishing rate $R$ and relative distance $\delta$ as 
the code length $N$ tends to infinity. Constructive families of bipartite-graph
codes with the best known tradeoff between $R$ and $\delta$ have been
found by the present authors \cite{bar06}. In particular, codes constructed in that paper
surpass the product bound on the minimum distance which is a common
performance benchmark for concatenated constructions.

Moving from constructive families to existence results obtained by averaging
over ensembles of bipartite-graph codes, it is possible to derive even better
rate-distance tradeoffs. In particular, bipartite-graph codes with random
local codes and random bipartite graphs 
attain the Gilbert-Varshamov (GV) bound for relatively small
code rates and are only slightly below it for higher rates \cite{bar06}.

A natural way to generalize codes on bipartite graphs is to consider
concatenations governed by regular $t$-partite hypergraphs, $t\ge 2.$
This code family was studied by Bilu and Hoory 
in \cite{bil04}. While constructive families
of bipartite-graph codes rely on the expansion property of the underlying 
graph, expansion is not well defined for hypergraphs.
Instead, \cite{bil04} put forward a property of hypergraphs, called
$\epsilon$-homogeneity, which replaces expansion 
in the analysis of hypergraph codes.
\cite{bil04} showed that there exist explicit, easily constructible
families of $\epsilon$-homogeneous hypergraphs, and estimated the number of
errors corrected by their codes under a decoding algorithm suggested in that
paper.

In this paper we study hypergraph codes both from the perspective
of weight distributions and their decoding. The results of \cite{bar06}
on weight distributions are advanced in several directions.
In Theorem \ref{thm:random} and its corollary
we prove that the code ensemble defined by random regular $t$-partite
hypergraphs and random local linear codes contains codes that meet the 
GV bound. The region of code rates for which this claim holds true
extends as $t$ increases from the value $t=2.$ 
We also show (Theorem \ref{thm:fixed}, Cor.~\ref{cor:fixed})
that the ensemble of hypergraph codes contains codes that attain the 
GV bound even if random hypergraphs are replaced with a
{\em fixed} $\epsilon$-homogeneous hypergraph. Specializing the last result
for $t=2$, we establish that expander codes
of Sipser and Spielman \cite{sip96} constructed from a fixed graph with a
large spectral gap and random local codes with high probability attain
the GV bound.
Finally, we derive an estimate of the average weight distribution for the
ensemble of hypergraph codes with a fixed local code (see Theorem \ref{thm:C2})
that refines
substantially a corresponding result in \cite{bar06} and generalizes it
from $t=2$ to arbitrary $t$.

The tradeoff between the rate and relative distance of hypergraph codes
shows an improvement over bipartite-graph codes for small values of the 
distance. On the other hand, the decoding algorithm of \cite{bil04}
does not exploit the full power of their codes; moreover, for 
small $\delta$ the proportion of errors corrected by it vanishes compared
to the value of the distance. Motivated by this, 
we propose a new decoding algorithm of hypergraph codes 
and estimate its error-correcting capability. 
We show that it corrects the number of
errors which constitutes a fixed proportion of the code's distance.

\subsection{Codes on bipartite graphs}
Let $G=(V,E)$ be a balanced, $n$-regular bipartite graph with the
vertex set $V=V_1\cup V_2, |V_1|=|V_2|=m$ and $|E|=N=nm$ edges.
Let us choose an arbitrary ordering of edges in $E$. For a given vertex 
$v\in V$ this defines an ordering of edges $v(1),v(2),\dots,v(n)$
incident to it. We denote this subset of edges by $E(v)$. 
Given a binary vector $x\in \{0,1\}^N,$ let us establish a one-to-one
correspondence between the coordinates of $x$ and the edges in $E$.
For a given vertex $v$ let $x(v)=(x_e, e\in E(v))$ be the subvector that 
corresponds to the edges in $E(v).$ Denote by $\lambda$ the second largest
in the absolute value eigenvalue of the graph $G$.

Consider a set of binary linear codes $A_v[n, R_1n]$ of length $n$
and rate $R_1\triangleq \dim(A_v)/n,$ where $v\in V.$ 
Define a {\em bipartite-graph code}
as follows:
  $$
   C(G,\{A_v\})=\{x\in \{0,1\}^N: \forall_{v\in V_1\cup V_2} x(v)\in A_v\}.
  $$
The rate of the code $C$ is easily seen to satisfy
  \begin{equation}\label{eq:rate2}
  R(C)\ge 2R_1-1.
 \end{equation}
If we assume that all the local codes are the same, i.e., $A_v=A,$ where
$A[n,R_1n,d_1=\delta_1 n]$ is some linear code, then the
distance of the code $C$ can be estimated as follows:
  $$
   d/N\ge \delta_1^2 \Big(1-\frac\lambda{d_1}\Big)^2
  $$
(we will write $C(G,A)$ instead of $C(G,\{A\})$ in this case).
In particular, if the spectral gap of $G$ is large, i.e., 
$\lambda$ is small compared to $d_1$, then the relative
distance $d/N$ is close to the value $\delta_1^2$, similarly to the case of the
direct product 
code $C=A\otimes A$.

The weight distribution of bipartite-graph codes constructed from random 
regular bipartite graphs and a fixed local code $A$ with a known weight
distribution was analyzed in \cite{bou99,len99}. In particular, it was shown
that if $A$ is the Hamming code $\cH_m$ then the ensemble $\cC=(C(G,A))$
contains asymptotically good codes. Generalizing these results,
paper \cite{bar06} studied the weight distribution of 
bipartite-graph codes
with fixed and random component codes $A$. It was shown that for $m\to\infty$
the 
ensemble of codes constructed from random regular 
bipartite graphs and
a fixed code $A$ with distance $d_1\ge 3$ contains asymptotically good
codes. 
It has also been shown \cite{bar06} that if the local codes are chosen
randomly, 
then the code ensemble $\cC$ 
contains codes that meet the GV bound in the interval
of code rates $R(C)\le 0.2.$ 

\subsection{Codes on hypergraphs}
Generalizing the above construction, 
let $H=(V,E)$ be a $t$-uniform $t$-partite $n$-regular hypergraph.
This means that the set of vertices $V=V_1\cup\dots\cup V_t$ of $H$
consists of $t$ 
disjoint parts of equal size, say, $|V_i|=m, 1\le i\le t.$ 
Every hyperedge $\{v_{i_1},v_{i_2},\dots,v_{i_t}\}$ contains exactly $t$
vertices, one from each part, and each  vertex is incident to $n$ hyperedges.
Below for brevity we say edges instead of hyperedges.
The number of edges of $H$ equals $N=mn$ which will also be the length
of our hypergraph codes.
As above, assume
that the edges are ordered in an arbitrary fixed way and denote by
$E(v)$ the set of edges incident to a vertex $v$.  
For definiteness, let us assume that edges $e_{(i-1)n+j},j=1,\dots,n$
are incident to the vertex $v_i\in V_1, 1\le i\le m.$
  
Given a binary vector $x\in\{0,1\}^N$ whose coordinates are in a one-to-one
correspondence with the edges of $H$ denote by $x(v)$ its subvector 
that corresponds to the edges in $E(v).$ 

Define a 
{\em hypergraph code} as follows: 
   $$
  C(H,\{A_v\})=\{x\in\{0,1\}^N: \forall_{v\in V} x(v)\in A_v\},
   $$
where $\{A_v, v\in V\}$ is a set of binary linear codes of length $n$. 
As above, if all the codes are the same, we write $C(H,A).$
Assume that all the codes $A_v$ have the same rate $R_1$, then
the rate of the code $C$ satisfies
  \begin{equation}\label{eq:rate-t}
    R(C)\ge tR_1-(t-1).
  \end{equation}
\begin{definition} \cite{bil04}
A hypergraph $H$ is called $\epsilon$-homogeneous if
for every $t$ sets $D_1,D_2,\dots,D_t$ with $D_i\subseteq V_i$ and
$|D_i|=\alpha_i m,$
  \begin{equation}
  \label{eq:homog}
   \frac{|E(D_1,D_2,\dots,D_t)|}{N}\le \prod_{i=1}^t \alpha_i
        +\epsilon\min_{1\le i<j\le t}\sqrt{\alpha_i\alpha_j},
\end{equation} 
where $E(D_1,D_2,\dots,D_t)$ denotes the set of edges that intersect
all the sets $D_i.$ 
\end{definition}
This definition quantifies the deviation of the hypergraph $H$ from
the expected behavior of a random hypergraph.
For $t=2$ the well-known ``expander mixing lemma''
asserts that
   $$
   \Big|\frac{|E(D_1,D_2)|}{N}-\alpha_1\alpha_2\Big|
         \le \frac \lambda n \alpha_1\alpha_2,
   $$
showing that regular bipartite graphs are $\lambda/n$-homogeneous.
This inequality is frequently used in the analysis of bipartite-graph codes
\cite{sip96,zem01}.

Let $A[n,R_1n,d_1=\delta_1 n]$ be a binary linear code.
The distance of a code $C(H,A)$ where $H$ is $\epsilon$-homogeneous satisfies
\cite{bil04}
  \begin{equation}\label{eq:dist-t}
    d/N\ge \delta_1^{\frac t{t-1}}-c_1(\epsilon,\delta_1,t)
  \end{equation}
where $c_1\to 0$ as $\epsilon\to 0.$

One of the main results in \cite{bil04} gives an explicit construction
of $\epsilon$-homogeneous hypergraphs $H$ starting with a regular graph 
$G(U,E)$ with degree $\Delta$ and second eigenvalue $\lambda.$ 
Putting $V_i=U, i=1,2,\dots,t$ and introducing a hyperedge whenever the $t$
vertices in the graph $G$ are connected by a path of length $t-1$, that paper
shows that the resulting hypergraph is $n$-regular and
$\epsilon$-homogeneous with $n=\Delta^{t-1}, \epsilon=2(t-1)\lambda/\Delta.$
Therefore, starting with a family of $\Delta$-regular bipartite graphs with
a large spectral gap, one can construct a family of regular
homogeneous hypergraphs with a small value of $\epsilon.$
Paper \cite{bil04} has also established that random $n$-regular
hypergraphs with high probability are $O(1/\sqrt n)$-homogeneous.

\section{Weight distributions}
Below we consider ensembles of random codes on graphs and hypergraphs.
In some cases the (hyper)graph will be selected randomly. In the case
of bipartite graphs this is done as follows. Connect the edges $e_{(i-1)n+j}, j=1,\dots,n$ to the 
vertex $v_i\in V_1,$ $i=1,\dots,m$. Next choose a permutation on the 
set $E$ with a uniform distribution and connect the remaining half-edges 
to the vertices in $V_2$ using this permutation. Similarly, to construct 
an ensemble of random hypergraphs,
we choose $t-1$ permutations independently with uniform distribution
and use them to connect the parts of $H$.

Random linear codes are selected from the standard ensemble of length-$n$
codes defined by $n(1-R_1)\times n$ random binary matrices whose entries
are chosen independently with a uniform distribution. 

We consider the following three ensembles of hypergraph codes.

\medskip 
{\bfsl Ensemble} $\cC_1(t).$ A code $C(H,\{A_1,\dots,A_t\})\in \cC_1(t)$ is constructed 
by choosing a random $t$-partite hypergraph $H$ and choosing
random local linear 
codes $A_i$ of length $n$ independently 
for each part $V_i\in V.$ 

\smallskip
{\bfsl Ensemble} $\cC_2(t,A).$ A code $C(H,A)\in \cC_2$ is 
constructed by choosing a random $t$-partite hypergraph $H$
and using the same fixed local code $A[n,R_1n,d_1]$ as a local 
code at every vertex.

\smallskip
{\bfsl Ensemble} $\cC_3(t,H).$ A code $C(H,\{A_v\})$ from this ensemble
is formed by choosing a fixed, nonrandom hypergraph $H$ and 
taking  random local linear codes $A_v$ independently 
for each vertex $v\in V$.

\medskip 
Our purpose is to compute ensemble-average asymptotic weight distributions 
for codes in these ensembles and to estimate the average minimum distance
assuming that $m\to\infty$ and $n$ is a constant. The case $t=2$ 
corresponds to ensembles of bipartite-graph codes, some of
which were studied in \cite{bar06,bou99,len99}. 
Below we will cover the remaining cases for the code ensembles $\cC_i(t),$
$i=1,2,3$ and any $t\ge 2$. 
Below $B_w=B_w(C)$ 
denotes the number of codewords of weight $w$.
Before proceeding, we note that upper bounds on the ensemble-average weight distribution 
in many cases also give a lower bound on the code's distance.
\begin{lemma} 
Suppose that for an ensemble of codes $\cC$ of length $N$
there exists an $\omega_0>0$ such that 
       $$\lim_{N\to\infty} \sum_{w\le\omega_0 N} \ee B_w=0.$$
Then for large $N$ the ensemble contains codes whose relative distance
satisfies $d/N\ge \omega_0.$ 
\end{lemma}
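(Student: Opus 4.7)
The plan is to apply a first-moment / Markov's inequality argument to the random variable $S = \sum_{1 \le w \le \omega_0 N} B_w$, which counts the number of nonzero codewords of weight at most $\omega_0 N$ in a random code $C$ drawn from $\cC$. (The sum in the hypothesis must be understood as ranging over positive weights, or equivalently $B_w$ must count only nonzero codewords; otherwise the $w=0$ term would contribute $1$ and prevent the expectation from tending to zero.) Since $S$ is a nonnegative integer-valued random variable, Markov's inequality gives
\begin{equation*}
\pp[S \ge 1] \;\le\; \ee S \;=\; \sum_{1 \le w \le \omega_0 N} \ee B_w.
\end{equation*}

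By the hypothesis, the right-hand side tends to $0$ as $N \to \infty$. Hence $\pp[S = 0] \to 1$. In particular, for all sufficiently large $N$ there exists at least one code $C \in \cC$ with $S(C) = 0$, that is, $C$ contains no nonzero codeword of weight $\le \omega_0 N$. The minimum distance of such a code therefore satisfies $d(C) > \omega_0 N$, which gives the claimed bound $d/N \ge \omega_0$.

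There is essentially no technical obstacle here: the argument is the standard probabilistic deletion / first-moment method, and the only subtlety worth flagging is the convention about whether $B_0$ is included in the weight enumerator (and the trivially achievable $d/N > \omega_0$ versus the stated $\ge \omega_0$, which differ only at the boundary and are equivalent in the asymptotic statement). The substantive work of the paper lies not in this lemma but in actually establishing hypotheses of the form $\sum_{w \le \omega_0 N}\ee B_w \to 0$ for the ensembles $\cC_1,\cC_2,\cC_3$, which is where the ensemble-average weight-enumerator estimates developed in the subsequent theorems will be needed.
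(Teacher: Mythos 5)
Your proof is correct and is essentially the same first-moment argument the paper uses: the paper applies a union bound over weights followed by Markov's inequality on each $B_w$, whereas you apply Markov once to the sum $S=\sum_{w\ge 1}B_w$, but these are the same computation. Your remark that the $w=0$ term must be excluded (or $B_w$ understood as counting nonzero codewords) is a correct and worthwhile clarification of an implicit convention in the paper's statement.
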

The proof is almost obvious because
    $$
      \Pr[d(C)\le\omega_0 N]\le \sum_{w\le \omega_0 N} \Pr[B_w(C)\ge 1]
        \le \sum_{w\le \omega_0 N}\ee B_w(\cC).
    $$
\medskip
\subsection{Ensemble $\cC_1(t)$}
\begin{theorem} \label{thm:random} 
For $m\to \infty$ the average weight distribution 
over the ensemble of linear codes $\cC_1(t)$ of length $N=mn$ and rate
(\ref{eq:rate-t}) satisfies 
$\ee B_{\omega N}\le 2^{N(F+\gamma)},$ where 
 \begin{equation}\label{eq:b}
  F=\omega t\log_2(2^{(1-R)/t}-1)-(t-1)h(\omega), 
         \quad\text{ if\;\;  } 0\le\omega\le 1-2^{(R-1)/t}
 \end{equation}
%if $0\le\omega\le 1-2^{(R-1)/t}$,
 \begin{equation}\label{eq:a}
 F=h(\omega)+R-1 \quad\text{ if \;\; }\omega\ge 1-2^{(R-1)/t},
 \end{equation}
and 
  $$
  \gamma\le tn^{-1}(1+\log_2n)+(t/2N)\log_2(2N),
  $$
$h(z)=-z\log_2 z-(1-z)\log_2(1-z).$\end{theorem}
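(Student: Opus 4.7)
My plan is to reduce $\ee B_{\omega N}$ to a product form and then compute each factor via a saddle-point estimate.

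\textbf{Step 1 (Reduction).} Because the $t$ local codes $A_i$ are chosen independently, and the $t-1$ random permutations defining the hypergraph $H$ are independent and uniform, for any fixed $x\in\{0,1\}^N$ of weight $w$ the conditional probability $\Pr[x\in C\mid H]$ factorises across parts. A symmetrisation argument based on the fact that $\pi(x)$ is uniformly distributed on weight-$w$ vectors when $\pi$ is a uniform random permutation yields
\[
\ee B_{\omega N} \;=\; \binom{N}{w}\, M^{\,t}, \qquad
M \;=\; \Pr_{y,\,A}\!\bigl[\,y\in A^m\,\bigr],
\]
where $y$ is a uniform weight-$w$ vector of length $N$, $A$ is a random linear $[n,R_1n]$ code, and $A^m$ is the length-$N$ direct product.

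\textbf{Step 2 (Estimating $M$).} I would rewrite $M=\ee_A[N_{A^m}(w)]/\binom{N}{w}$ and compute $\ee_A[N_{A^m}(w)]=[z^w]\,\ee_A[W_A(z)^m]$. Setting $q=(2^{R_1n}-1)/(2^n-1)$, the averaged weight enumerator of $A$ is $\ee_A W_A(z)=1+q((1+z)^n-1)=:\mu(z)$. Grouping codeword tuples $(c_1,\dots,c_m)\in A^m$ by their weight profile $\vec m=(m_0,\dots,m_n)$, the leading-order asymptotics of the sum as $m\to\infty$ are governed by the maximum over profiles $\vec\beta=\vec m/m$ of the Lagrangian
\[
H(\vec\beta) \;+\; (1-\beta_0)\log_2 q \;+\; \sum_{k\ge 1}\beta_k\log_2\binom{n}{k},
\]
subject to $\sum_k k\beta_k=\omega n$ and $\sum_k\beta_k=1$. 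The optimiser has the tilted-Boltzmann form $\beta_k\propto q\binom{n}{k}\rho^k$ for $k\ge 1$ and $\beta_0\propto 1$, with tilt $\rho$ determined by the saddle equation $\omega n=\rho\mu'(\rho)/\mu(\rho)$.

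\textbf{Step 3 (Two regimes).} Substituting the optimum back into the Lagrangian yields $\log_2 M$ in closed form. Two limiting behaviours emerge as $\rho$ varies with $\omega$. For $\omega\le 1-2^{(R-1)/t}$ the saddle sits in the interior of parameter space, and the simplification $q\approx 2^{-(1-R_1)n}$ reduces the resulting expression to (\ref{eq:b}). For $\omega\ge 1-2^{(R-1)/t}$ the saddle moves into the regime where $\mu(\rho)\approx q(1+\rho)^n$; here $A^m$ effectively behaves as a uniform random linear code of dimension $R_1N$, and one recovers the familiar random-code weight distribution (\ref{eq:a}). The additive $\gamma$ in the statement absorbs Stirling errors, polynomial factors from the profile count, and the slack in the approximation $q\approx 2^{-(1-R_1)n}$.

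\textbf{Main obstacle.} The key technical difficulty is that within each part all vertices share the \emph{same} random code $A_i$, so the events $\{x(v)\in A_i\}_{v\in V_i}$ are correlated and $\ee_A[W_A^m]\ne(\ee_A W_A)^m$. As a consequence, the direct product $A^m$ has an enhanced low-weight spectrum compared to a uniformly random code of the same dimension, which is precisely what produces the departure of the small-$\omega$ formula (\ref{eq:b}) from the random-code formula (\ref{eq:a}). Verifying the saddle-point expression in both regimes requires carefully accounting for the extent to which the codeword counts $N_A(k)$ concentrate around their means, a condition that can fail for the atypical weights $k$ that dominate the small-$\omega$ regime.
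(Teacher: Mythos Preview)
Your Step~1 reduction matches the paper. The difficulty you flag as the ``main obstacle'' stems from reading the ensemble $\cC_1(t)$ as one random code per \emph{part} (the wording ``independently for each part $V_i$'' admittedly invites this). Under that reading the theorem is in fact false: for $\omega$ bounded away from~$0$ and $m\gg 2^n$, a typical weight-$w$ vector has blocks that span all of $\{0,1\}^n$, so $\Pr[x\in C_1]=\Pr[H_1=0]=2^{-n^2(1-R_1)}$, a quantity independent of $m$; this forces $N^{-1}\log_2\ee B_{\omega N}\to h(\omega)$ rather than $h(\omega)+R-1$. The paper's proof uses, and the stated bound requires, the per-\emph{vertex} reading (each $v\in V$ gets its own independent random code, as in $\cC_3$): the line ``a vector $x\in\cX_{s,w}$ is contained in $C_1$ with probability $2^{sn(R_1-1)}$'' is exact precisely because the $s$ nonzero blocks face independent parity-check matrices. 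With per-vertex codes one has $\ee\bigl[\prod_v W_{A_v}(z)\bigr]=\mu(z)^m$ on the nose, and your obstacle evaporates.

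Once that is settled, your route is valid but more elaborate than the paper's. You optimise over the full weight profile $(\beta_0,\dots,\beta_n)$ via a tilted distribution; the paper instead notices that under per-vertex independence $\Pr[x\in C_1]$ depends on $x$ only through the number $s$ of nonzero blocks, never on their individual weights. It therefore bounds the count $|\cX_{s,w}|$ by concavity of entropy (equal block weights $\omega/x$ dominate, $x=s/m$) and is left with the one-variable problem $\max_{\omega\le x\le 1}x\bigl(R-1+th(\omega/x)\bigr)$, whose interior maximiser $x_0=\omega/(1-2^{(R-1)/t})$ and boundary value $x=1$ give the two cases directly. Your multivariable Lagrangian collapses to this same scalar problem after marginalising out $\beta_1,\dots,\beta_n$ at fixed $1-\beta_0=x$, but the paper's shortcut skips that computation entirely.
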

\begin{proof} 
The proof is an extension of the corresponding result for $t=2$ in 
\cite{bar06}.
%The results for the ensembles $\cC_2, \cC_3$ have
%no direct precursors in the literature.
Let $C_i, i=1,\dots, t$ be the set of vectors 
$x\in \{0,1\}^N$ that satisfy the linear constraints of part $V_i$
of the hypergraph $H$ so that $C(H,A)=\cap_i C_i.$ Let $P_i=\Pr[x\in C_i].$
The events $x\in C_i$ for different $i$ are independent, and therefore
  $$
    \Pr[x\in C]=P_i^t
  $$
(for any $i=1,\dots,t).$ Let $B_w(C_i)$ be the random number of vectors of 
weight $w$ in the code $C_i$. Then
  $$
   \ee B_w(C)=\binom Nw \Pr[x\in C]= \binom Nw \prod_{i=1}^t \frac {\ee B_w(C_i)}{\binom Nw}.
  $$
Let $\cX_{s,w}$ be the set of vectors of weight $w=\omega N$ 
whose nonzero coordinates are incident to some vertices 
$v_{i_1},\dots,v_{i_s}\in V_1,\, s\ge w/n.$ 
Let $w_j=\wt(x(v_{i_j})),j=1,\dots,s$ and let $\omega_j=w_j/n.$ We have
  $$
  |\cX_{s,w}|=\binom ms \sum_{\begin{substack}{w_1,\dots,w_s\\
     \sum w_j=w}\end{substack}} \prod_{j=1}^s \binom n{w_j}
   \le \binom ms\sum_{\begin{substack}{w_1,\dots,w_s\\
     \sum w_j=w}\end{substack}} 2^{n\sum_j h(\omega_j)}.
  $$
By convexity of the entropy function, the maximum of the last expression
on $\omega_1,\dots,\omega_s$ under the constraint $n\sum_j\omega_j=\omega N$
is attained for $\omega_j=\omega m/s, j=1,\dots,s.$
Since the sum contains no more than $n^s$ terms, we obtain
  $$
  |\cX_{s,w}|\le 2^{m h(x)+ s\log n+s nh(\omega m/s)}
    \le 2^{N(x h(\omega/x)+\epsilon)}
  $$
where $x=s/m$ and $\epsilon=(1+\log n)/n.$ A vector $x\in \cX_{s,w}$
is contained in $C_1$ with probability $2^{sn(R_1-1)}$. Thus,
  $$
   \ee B_w(C_1)=|\cX_{s,w}|2^{sn(R_1-1)},
  $$
and the same expression is true for $\ee B_w(C_i), i=2,\dots,t.$
Therefore,
  $$
  \ee B_w(C)\le \binom Nw^{-(t-1)}2
^{tN (\max_{\omega\le x\le 1}(x(h(\omega/x)+R_1-1))
    +\epsilon)}.
  $$ 
Since $t(R_1-1)\le R-1,$ we obtain $\ee B_w(C)\le 2^{N(F(\omega)+\gamma)},$
where
  $$
    F(\omega)\le -(t-1)h(\omega)+t\max_{\omega\le x\le 1}(x(R_1-1+h(\omega/x)))
  $$
  $$
\le -(t-1)h(\omega)+\max_{\omega\le x\le 1}(x(R-1+th(\omega/x))).
  $$
 The maximum on $x$ of $x(R-1+th(\omega/x))$ is attained for $x=x_0=
\omega/(1-z)$ where $t\log_2 z=R-1$. The two cases in the theorem are obtained
depending on whether $x_0<1$ or not. If $x_0<1$, we substitute $x_0$ in the
expression for $F(\omega)$ and obtain
  $$
   F(\omega)\le -(t-1)h(\omega)+\omega t\log_2\frac{z}{1-z}
  $$
which implies (\ref{eq:b}) on account of the identity $R-1+th(z)=t(1-z)
\log_2(z/(1-z))$. If $x_0\ge 1$, we substitute
 the value $x=1$ to obtain (\ref{eq:a}). 
  \end{proof}

\begin{corollary} Let $\omega^\ast$ be the only nonzero root of the equation
  $$
   \omega\Big(R-1-t\log_2\Big(1-2^{(R-1)/t}\Big)\Big)=(t-1)h(\omega).
  $$
Then the average relative distance over ensemble $\cC_1(t)$ behaves as
  \begin{align*}
   \delta(R)&\ge \omega^\ast,  &\text{if\;\;} R\le \log_2(2(1-\dgv(R))^t)\\
   \delta(R)&\ge \dgv(R),      &\text{if\;\;}R>\log_2(2(1-\dgv(R))^t),
  \end{align*}
where $\dgv(x)\triangleq h^{-1}(1-x).$
\end{corollary}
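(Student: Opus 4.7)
My plan is to apply the Lemma preceding Theorem~\ref{thm:random} to the bound $\ee B_{\omega N}\le 2^{N(F(\omega)+\gamma)}$ it furnishes. Since $\gamma\to 0$ as $n\to\infty$, the ensemble contains codes of relative distance at least $\omega_0$, where $\omega_0$ denotes the smallest positive zero of the piecewise function $F$; the corollary then reduces to computing this $\omega_0$ in terms of $R$ and $t$.

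In the range $\omega\le\omega_b:=1-2^{(R-1)/t}$ I will rewrite~(\ref{eq:b}) in the form $F(\omega)=\mu\omega-(t-1)h(\omega)$ with $\mu=R-1-t\log_2(1-2^{(R-1)/t})$; a short verification shows $\mu>0$ for every $R\ge 0$ and $t\ge 2$ (equivalently, $2^{(R-1)/t}>1/2$). Strict concavity of $h$ makes $F$ strictly convex, with $F(0)=0$, $F'(0^+)=-\infty$, and $F(1)=\mu>0$. Hence $F$ first decreases from zero and, by convexity, has exactly one positive root, which is precisely the $\omega^\ast$ appearing in the statement. In the complementary range $\omega\ge\omega_b$, formula~(\ref{eq:a}) gives $F(\omega)=h(\omega)+R-1$, whose unique zero on $[0,1/2]$ is $\dgv(R)$. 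Continuity of $F$ at $\omega_b$ is a routine check, since both branches arise from the same maximization in the proof of Theorem~\ref{thm:random}.

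The case split now follows by comparing $\omega^\ast$ with $\omega_b$. Convexity of the first branch gives $\omega^\ast\le\omega_b$ if and only if $F(\omega_b)\ge 0$, which by continuity is equivalent to $h(\omega_b)+R-1\ge 0$, i.e.\ to $\omega_b\ge\dgv(R)$ (using that $\omega_b\le 1/2$ in our parameter range). Unfolding $\omega_b=1-2^{(R-1)/t}$ rearranges this as $R\le 1+t\log_2(1-\dgv(R))=\log_2(2(1-\dgv(R))^t)$. Under this hypothesis the first positive zero of $F$ lies in the first branch and equals $\omega^\ast$, yielding $\delta(R)\ge\omega^\ast$; in the opposite case $F$ stays negative throughout $[0,\omega_b]$ and first vanishes in the second branch at $\dgv(R)$, yielding $\delta(R)\ge\dgv(R)$. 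The main technical obstacle I foresee is the gluing step across the two regimes, namely checking continuity of $F$ at $\omega_b$ and unwinding the chain of equivalent inequalities that produces the threshold $R=\log_2(2(1-\dgv(R))^t)$; the rest is routine convexity together with the standard Gilbert--Varshamov analysis on the second branch.
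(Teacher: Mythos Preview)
Your argument is correct and supplies exactly what the paper omits (it only cites the analogous result in \cite{bar06}). Locating the least positive zero of the exponent $F$ from Theorem~\ref{thm:random}, exploiting the strict convexity of the first branch and continuity at the breakpoint $\omega_b=1-2^{(R-1)/t}$, and then unwinding $F(\omega_b)\ge 0\Leftrightarrow \omega_b\ge\dgv(R)\Leftrightarrow R\le\log_2\bigl(2(1-\dgv(R))^t\bigr)$ is precisely the expected line of reasoning.

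One small remark: your rewriting of the first branch as $\mu\omega-(t-1)h(\omega)$ with $\mu=R-1-t\log_2(1-2^{(R-1)/t})$ agrees with the derivation inside the proof of Theorem~\ref{thm:random} (where the bound is $-(t-1)h(\omega)+\omega t\log_2\frac{z}{1-z}$ with $z=2^{(R-1)/t}$) and with the equation defining $\omega^\ast$, whereas the displayed formula~(\ref{eq:b}) carries a sign slip; you have silently used the correct version. Also note that with $n$ held fixed the term $tn^{-1}(1+\log_2 n)$ in $\gamma$ does not vanish with $N$; your reading ``$\gamma\to 0$ as $n\to\infty$'' is the intended interpretation needed for the corollary to hold as stated.
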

The proof is analogous to the proof of Corollary 4 in \cite{bar06} and will
be omitted.

For $t=2$ we proved in \cite{bar06} that ensemble $\cC_1$ contains
codes that reach the GV bound if the code rate satisfies
$0\le R\le 0.202.$ This result forms a particular case of the above 
corollary. Increasing $t$, we find that the ensemble contains codes that 
reach the GV bound for the values of the rate as shown below:
  \begin{alignat*}{3}
     &t=3       & &\quad4      & &\quad10\\
      &R\le 0.507& &\quad0.737& &\quad0.998     .
  \end{alignat*}
Thus already for $t=10$ almost all codes in the ensemble $\cC_1$ 
attain the GV bound for all but very high rates.

\subsection{Ensemble $\cC_2(t,A)$.} 
In this case the results depend on the amount of 
information available for the local codes. Specifically, \cite{bar06} shows
that for $t=2$ the ensemble contains asymptotically good codes provided
that the distance of the local code $A$ is at least 3.
In the case when the weight distribution of the code $A$ is known,
a better estimate is known from \cite{bou99,len99}.
\begin{theorem}\label{thm:chernov} Let $A$ be a linear code of
length $n$ with weight enumerator $a(x)=\sum_{i=0}^n a_ix^i.$ 
Let $B_w$ the random number of codewords of weight $w$ of a code 
$C(H,A)\in \cC_2(t,A).$ Then its average value
over the ensemble satisfies
\begin{align*}
   \lim_{N\to\infty}\frac 1N \log_2 \ee B_{\omega N}\le 
       &-(t-1)h(\omega)+
       \frac t{\ln2}\Big(\frac1n\ln a(e^{s^\ast})-s^\ast\omega\Big),
\end{align*}
where $s^\ast$ is the root of $\;(\ln a(e^s))'_s=n\omega.$ 
\end{theorem}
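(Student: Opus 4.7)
The plan is to mimic the decomposition used in Theorem~\ref{thm:random}, but to take advantage of the fact that $A$ is fixed and its weight enumerator $a(z)$ is known; this lets one replace the crude entropy estimate on per-vertex weight profiles by a sharp saddle-point bound on the weight enumerator of the product code $A^m$. Writing $C=\bigcap_{i=1}^t C_i$ as before, where $C_i$ enforces the local-code constraints at part $V_i$, the convention that $V_1$ carries the canonical edge ordering makes $C_1$ the deterministic product code $A^m$, while for $i\geq 2$ the random code $C_i$ is obtained from that same $A^m$ by the independent uniform coordinate-permutation $\pi_i$ used to attach $V_i$.

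For any fixed $x$ of weight $w$ and any $i\geq 2$, the vector $x\circ\pi_i^{-1}$ is uniformly distributed on the weight-$w$ sphere in $\{0,1\}^N$, so
\begin{equation*}
\Pr[x\in C_i]\;=\;\frac{B_w(A^m)}{\binom{N}{w}},
\end{equation*}
where $B_w(A^m)=[z^w]a(z)^m$ is the number of weight-$w$ codewords of $A^m$. Using independence of $\pi_2,\dots,\pi_t$ and summing only over those $x\in C_1$ of weight $w$ (which contributes a factor $B_w(A^m)$), one obtains
\begin{equation*}
\ee B_w\;=\;\frac{B_w(A^m)^{\,t}}{\binom{N}{w}^{\,t-1}}.
\end{equation*}

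Next I would estimate $B_w(A^m)$ by the standard Markov/Chernoff bound on polynomial coefficients, $[z^w]a(z)^m\leq \inf_{s>0} e^{-sw}a(e^s)^m$, whose infimum is attained at the $s=s^\ast$ solving the first-order condition $(\ln a(e^s))'_s = w/m = n\omega$. Substituting this into the displayed expression for $\ee B_w$, taking $\tfrac{1}{N}\log_2$, and combining the contributions of the two factors via $\tfrac{1}{N}\log_2\binom{N}{w}\to h(\omega)$ yields precisely the claimed upper bound.

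The only delicate point in this plan is the identification $\Pr[x\in C_i]=B_w(A^m)/\binom{N}{w}$, which rests on recognizing each $C_i$ (for $i\geq 2$) as a uniformly random coordinate-permutation of the fixed product code $A^m$ and on the independence of the permutations across parts; once this is clean, the remainder is a textbook one-variable Chernoff optimization, and the usual tightness of saddle-point bounds for coefficients of $a(z)^m$ up to subexponential factors guarantees that no hidden polynomial corrections survive in the $\tfrac{1}{N}\log_2$ limit.
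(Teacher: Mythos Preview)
Your argument is correct. The paper actually states this theorem without proof, but the decomposition you use is exactly the one the paper employs in the proof of the neighboring Theorem~\ref{thm:C2}: there too one arrives at $\ee B_w(C)\le |U(w,d)|^t/\binom{N}{w}^{t-1}$, which is the same identity $\ee B_w = B_w(A^m)^t/\binom{N}{w}^{t-1}$ with $B_w(A^m)$ replaced by the cruder upper bound $|U(w,d_1)|$. Your only addition is the standard Chernoff step $[z^w]a(z)^m\le e^{-sw}a(e^s)^m$ optimized at $s^\ast$, which is precisely what the theorem's label (``chernov'') and statement point to; so your route is the intended one.
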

This theorem enables us to estimate the 
asymptotics of the mean relative distance
$\delta=\lim\limits_{m\to\infty} \frac{\ee d(C)}{N}$ for the ensemble 
$\cC_2$. Let us consider several examples.

\medskip
1. Let $t=3$ and let $A$ be the Hamming code of length $n=15$
and rate $R_1=11/15.$ Then the rate $R(\cC_2)\ge 0.2$ and the distance 
$\delta=0.2307$. The relative GV distance for this rate is
$\dgv(0.2)=0.2430.$

\smallskip
2. Let $t=3$ and let $A$ be the Hamming code of length $n=31$. Then
$R(\cC_2)\ge 16/31$ and $\delta=0.0798.$ Using the same code with $t=4$
gives $R(\cC_2)\ge 11/31$ and $\delta=0.1607$ while $\dgv(11/31)=0.1646.$

\smallskip
3. Let $t=3$ and let $A$ be the 2-error-correcting primitive
BCH code of length $n=31$ and rate $R_1=21/31.$ Then 
the rate $R(\cC_2)\ge 1/31$ and the value of $\delta$ is 
$\approx 0.3946608$. The relative GV distance 
for this rate is $\dgv(1/31)=0.3946614.$

\medskip
\remove{
\begin{theorem} (a) Let $A$ be a $[n,R_0n,d\ge t/(t-1)]$ linear code.
Then there exists an $\omega_0>0$ such that 
  $$
       \lim_{m\to\infty}\sum_{w\le \omega_0 N} \ee B_{w}=0.
  $$
In particular, the ensemble $\cC_2(t,A)$ contains asymptotically good codes.
\end{theorem} 
}
Let us turn to the case when only the minimum distance $d_1$ of the code
$A$ is available. In \cite{bar06} we addressed the case $t=2$, proving 
that as long as $d_1\ge 3$, there exists an $\epsilon>0$ such that the 
ensemble-average relative distance $\delta>\epsilon$ as $m\to\infty.$
In the next theorem this result is extended to arbitrary $t\ge 2.$
We also prove a related result which gives an upper
bound on the average weight spectrum and provides a way of estimating the
value of $\omega_0.$ 

\begin{theorem}\label{thm:C2} (a) Let $A$ be the local code of length $n$ 
and distance $d_1$
used to construct
the ensemble $\cC_2(t,A)$ of hypergraph codes. Let $x_0=x_0(\omega)$ be the positive 
solution of the equation
   \begin{equation}\label{eq:root1}
    \omega n+\sum_{i=d_1}^n\binom ni(\omega n-i)x^i=0.
   \end{equation}
The ensemble-average weight distribution satisfies
  $$
    \lim_{N\to\infty} \frac 1N \log \ee B_{\omega N}\le \frac tn
      \log\frac{1+\sum_{i=d_1}^n\binom ni x_0^i}{x_0^{\omega n}}-(t-1)h(\omega).
  $$
(b) The inequality $d_1> t/(t-1)$ gives a sufficient condition for the ensemble 
to contain asymptotically good codes.
\end{theorem}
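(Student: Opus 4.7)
The plan is to exploit the product structure induced by the random permutations defining the hypergraph. Write $C(H,A)=\bigcap_{i=1}^{t} C_i$ with $C_i=\{y:y(v)\in A\text{ for all }v\in V_i\}$, and observe that the random permutations attaching edges to $V_2,\ldots,V_t$ are mutually independent and independent of the fixed attachment to $V_1$. For any $x$ with $\wt(x)=w$, a uniform permutation sends $x$ to a uniformly distributed weight-$w$ vector, hence $\Pr[x\in C_i]=N_1(w)/\binom{N}{w}$ for $i\ge 2$, where $N_1(w)=|C_1\cap\{y:\wt(y)=w\}|=[z^w]\,a(z)^m$ and $a(z)=\sum_j a_j z^j$ is the weight enumerator of $A$. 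Summing the factorization $\Pr[x\in C]=\mathbf{1}[x\in C_1]\prod_{i\ge 2}\Pr[x\in C_i]$ over all weight-$w$ vectors $x$ yields the clean identity
$$\ee B_w=N_1(w)^{\,t}\big/\binom{N}{w}^{\,t-1}.$$

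To prove~(a), I upper-bound $N_1(w)$ by a saddle-point estimate. Using $a_j\le\binom{n}{j}$ together with $a_j=0$ for $0<j<d_1$ gives $a(z)\le\tilde a(z):=1+\sum_{i\ge d_1}\binom{n}{i}z^i$ for $z\ge 0$, so the usual Chernoff-style bound $[z^w]a(z)^m\le a(z_0)^m/z_0^w\le\tilde a(z_0)^m/z_0^w$ applies for every $z_0>0$. Optimizing in $z_0$, the first-order condition $z_0\tilde a'(z_0)/\tilde a(z_0)=w/m=\omega n$ rearranges precisely into equation~(\ref{eq:root1}), whose positive root is the $x_0(\omega)$ of the theorem. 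Combining with Stirling's asymptotics $\binom{N}{w}=2^{Nh(\omega)(1+o(1))}$ and taking $\frac{1}{N}\log_2$ of the identity above yields the claimed bound.

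For part~(b), I will analyse this bound as $\omega\to 0^{+}$. From~(\ref{eq:root1}) the root $x_0$ also tends to $0$, with the $i=d_1$ term dominating: $d_1\binom{n}{d_1}x_0^{d_1}\sim\omega n$, so $\log_2 x_0\sim (1/d_1)\log_2\omega$. Substituting into part~(a), $\frac{t}{n}\log_2\tilde a(x_0)=O(\omega)$, while $-t\omega\log_2 x_0\sim -(t/d_1)\omega\log_2\omega$ and $(t-1)h(\omega)\sim -(t-1)\omega\log_2\omega$. Hence the leading $\omega\log_2\omega$ coefficient of the overall upper bound is $((t-1)d_1-t)/d_1$; since $\omega\log_2\omega<0$ for small $\omega>0$, the bound is strictly negative on some interval $(0,\omega_0)$ exactly when $(t-1)d_1>t$, i.e.\ $d_1>t/(t-1)$. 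The lemma at the beginning of Section~II then produces a code in $\cC_2(t,A)$ with $d/N\ge\omega_0$.

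The main obstacle is the leading-order asymptotic near $\omega=0$: one has to pin down $x_0$ precisely enough to extract the $-\omega\log_2\omega$ coefficients from both $-t\omega\log_2 x_0$ and $-(t-1)h(\omega)$, and verify that the remaining $O(\omega)$ terms cannot swamp the leading term or flip its sign. Once this delicate cancellation is controlled, the rest reduces to a routine generating-function / saddle-point calculation combined with the probability-decomposition identity.
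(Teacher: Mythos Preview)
Your argument for part~(a) is correct and is the same computation as the paper's, recast in generating-function language: your Chernoff bound $[z^w]\tilde a(z)^m\le\tilde a(z_0)^m/z_0^{w}$ with stationary point~(\ref{eq:root1}) is exactly the paper's Lagrange optimization of the largest multinomial term in $|U(w,d_1)|$, and your exact identity $\ee B_w=N_1(w)^t/\binom{N}{w}^{t-1}$ becomes the paper's inequality~(\ref{eq:U}) once you invoke $a_j\le\binom{n}{j}$. For part~(b), however, you take a different route. The paper does not go through~(a) at all: it writes a separate, cruder combinatorial bound $|U(w,d_1)|\le\binom{m}{w/d_1}\binom{n}{d_1}^{w/d_1}2^{wn/d_1}$ and combines it with the elementary estimates $(n/k)^k\le\binom{n}{k}\le(en/k)^k$ to obtain $\ee B_w\le(\text{const}\cdot\omega)^{w\gamma}$ directly, where $\gamma=(t-1)-t/d_1$. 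Your approach of extracting the $\omega\log_2\omega$ leading term from the saddle-point exponent recovers the same exponent~$\gamma$ and has the virtue of unifying~(a) and~(b), but it leaves one step unspelled: going from ``the asymptotic exponent is negative on $(0,\omega_0)$'' to ``$\sum_{w\le\omega_0 N}\ee B_w\to 0$'' is not automatic, because the exponent degenerates to~$0$ at the left endpoint. The clean fix is to use your non-asymptotic Chernoff inequality with an explicit choice such as $z_0=(w/N)^{1/d_1}$, which gives $\ee B_w\le(C\omega^{\gamma})^{w}$ for a constant~$C$ depending only on $n,d_1,t$, after which the summation and the lemma apply immediately; the paper's direct bound simply sidesteps this issue.
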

\begin{proof} In the proof we write $d$ instead of $d_1$ to refer to the 
distance of the code $A$.

(a) Let $H$ be a random hypergraph and $C(H,A)$ be the corresponding
code. Recall that $C=\cap_i C_i,$ where $C_i$ is the set of vectors
that satisfy the constraints of part $i$ of the graph. 
Let $U_i(w,d)$ be the set of vectors $x\in \{0,1\}^N$
such that $\wt(x)=w$ and $\wt(x(v))=0$ or $\wt(x(v))\ge d$ 
for all $v\in V_i.$ Since the number of such vectors is the same for all $i$,
below we write $|U(w,d)|$ omitting the subscript.
Let us choose a vector $x\in\{0,1\}^N$ randomly with a uniform
distribution. Then
  $$
   \Pr[x\in C_1|\wt(x)=w]\le \frac{|U(w,d)|}{\binom Nw}
  $$
and for $i\ge 2,$
  $$
   \Pr[x\in C_i|\wt(x)=w,x\in C_1]=\Pr[x\in C_i|\wt(x)=w].
  $$
Then
  $$
    \ee B_w(C)=\binom N w\Pr[x\in C|\wt(x)=w]=\binom Nw 
   (\Pr[x\in C_1|\wt(x)=w])^t
  $$
  \begin{equation}\label{eq:U}
    \le \frac{|U(w,d)|^t}{\binom Nw^{t-1}}
  \end{equation}
%Their number equals
%  $$
%   |U(w,d|=[x^w]\Big(\Big(\sum_{i=d}^n \binom n ix^i\Big)^m\Big).
%  $$
Given a vector $x$ denote by $j_\ell$ the number of vertices $v\in V_i$
such that $\wt(x(v))=\ell.$ Clearly, 
   $$
     |U(w,d)|= \sum_{\begin{substack}
        {j_0,j_d,j_{d+1},\dots,j_n\\ \sum \ell j_\ell=w,\,j_0+\sum
\limits_{\ell\ge d}j_{\ell}=m}
       \end{substack}}
      \binom m{j_0,j_d,\dots,j_n}\prod_{\ell=d}^n \binom n \ell^{j_\ell}
   $$
This sum contains no more than $(m+1)^n=O(N^n)$ terms, 
so for $N\to\infty$ its exponent is determined by the maximum term (which
has exponential growth).
\remove{Taking the logarithm and letting $w=\omega N$, we obtain
  $$
   \log |U(w,d)|^t=t\log \Bigg\{\sum_{\begin{substack}
        {j_d,j_{d+1},\dots,j_n\\ \sum \ell j_\ell=w,\,\sum j_\ell=m}
       \end{substack}}
      \binom m{j_d,\dots,j_n}\prod_{\ell=d}^n \binom n \ell^{j_\ell}\Bigg\}
  $$
}
We obtain
 \begin{equation}\label{eq:h}
   \frac 1N\log|U(\omega N,d)|^t \le \frac t n 
      \max_{\begin{substack}{\nu_0,\nu_d,\dots,\nu_n\\[1pt] 
         \sum \ell\nu_\ell=\omega n,\,\sum \nu_\ell=1 }\end{substack}}
     \Big\{h(\nu_0,\nu_d,\nu_{d+1},\dots,\nu_n)
     +\sum_{\ell=d}^n \nu_\ell\log\binom n\ell\Big\}+\frac{\log N}m,
  \end{equation}
where $\nu_\ell=j_\ell/m, \ell=0,d,d+1,\dots,n,$ and $h(\underline x)=-
\sum_{i}x_i\log x_i.$
The objective function is concave, so the point of extremum is found
from the system of equations
  \begin{align*}
    \binom ni( {1-\sum_{\ell=d}^n \nu_\ell})&={\nu_i}\mu^{-i}, \quad i=d,d+1,\dots,n\\
    \sum_{\ell=d}^n \ell\nu_\ell&=\omega n.
  \end{align*}
Its solution is given by
%  $$
%    \nu_i=\frac{\binom ni}{\mu^{(n-d+1)(d+n)/2}+
%          \sum_{\ell=d}^n\binom n \ell\mu^{(n(n+1)-d(d+1))/2-\ell}}
%            , \quad i=d,d+1,\dots,n.
%  $$         %\prod_{s\ge d,s\ne \ell}\mu^s}
 $$
    \nu_i=\frac{\binom ni\mu^{i}}{1+
          \sum_{\ell=d}^n\binom n \ell\mu^{\ell}}
            , \quad i=d,d+1,\dots,n.
  $$         %\prod_{s\ge d,s\ne \ell}\mu^s}\mu^{d(d-1)-n(n+1)} 
where $\mu$ is chosen so as to satisfy the last equation of the system.
Evaluating $\sum_i i\nu_i$ and writing $x$ instead of $\mu,$ we observe
that it should satisfy Eq. (\ref{eq:root1}). 
%Note that this equation
%always has a root $x_0>0$ 
%because $\sum_{i=d}^n\binom ni(\omega n-i)x^i$ equals 
%0 for $x=0$ and tends to $-\infty$ as $x\to\infty.$ 
This equation has a unique root $x_0>0$ 
because putting $x=p/(1-p),$ we can write it as
  $$
    \omega n\Big(\frac{\Pr[X=0]}{\Pr[X\ge d]}+1\Big)=\ee[X|X\ge d],
  $$
where $X$ is a binomial $(p,1-p)$ random variable. As $p$ changes from
0 to 1, the left-hand side of the last equation decreases monotonically
from $+\infty$ to $\omega n$ while the right-hand side increases 
monotonically from $d$ to $n$.

Finally, computing the entropy and simplifying, we obtain the estimate
  $$
   \lim_{N\to\infty}\frac 1N\log|U(\omega N,d)|^t \le
      \log\frac{1+\sum_{i=d}^n\binom ni x_0^i}{x_0^{\omega n}}.
  $$

\vspace{3ex}
(b) The proof of the second part is analogous to the case of $t=2$ in \cite{bar06}.
Let $w,1\le w\le N$ be the weight and let $p=w/d.$
We have
   \begin{align*}
    |U(w,d)|&\le \sum_{i=w/n}^p \binom ni\binom n d^i\binom{in}{(p-i)d}
  \\
  &   \le \binom m p\binom nd^p\sum_{i=w/n}^p \binom {pn}{(p-i)d}\\
  &
   \le \binom mp \binom nd^p2^{pn}.
  \end{align*}
Then 
  $$
    \ee B_w(C)\le \Big(\binom mp \binom nd^p2^{pn}\Big)^t\binom Nw^{1-t}.
  $$
Using the estimates  $(\frac nk)^k\le\binom nk\le (\frac{en}k)^k$, we compute
  \begin{align*}
   \ee B_w(C)&\le 
   \Big(\frac {em}p\Big)^{pt}n^{dpt}2^{tpn}\Big(\frac w N\Big)^{w(t-1)}\\
    &=(sm/w)^{\frac wd(t-d(t-1))}
  \end{align*}
where $s=((ed2^n)^t n^d)^{\frac1{t-d(t-1)}}.$ Thus, for any $\omega$ satisfying
$\omega<s/m,$ the average number of vectors of weight $\omega N$ tends to 0 as
$m\to\infty$ as long as $d (t-1)>t.$ This proves that under this condition the
ensemble contains asymptotically good codes.
\end{proof}

{\em Examples.} Let $A$ be the $[7,4,3]$ Hamming code and let $t=2$.
Theorem \ref{thm:C2}(a) implies a lower bound $\delta\ge 0.01024$
on the average relative distance for the ensemble $\cC_2(2,A).$
This improves upon previous results (\cite{bou99,len99}; also Part (b) of this
theorem) which assert only that the ensemble contains asymptotically good
codes.
Of course, in this case we can use the entire weight distribution of the
code $A$ to find the estimate $\delta\ge 0.186$ from Theorem \ref{thm:chernov};
however, in cases when the weight distribution is difficult to find, 
the last theorem provides new information for the ensemble of graph codes.

Similarly, for $A[23,12,7]$
from Theorem \ref{thm:C2}(a) we obtain the estimate $\delta\ge 0.0234$.
Again, using the entire weight distribution, it is possible to obtain a
better estimate.

\bigskip Part (a) of the last theorem implies the following corollary
which shows what happens to the average weight spectrum of the ensemble
for long local codes.
\begin{corollary} Let $d_1=\delta_1 n.$ Then
    $$
     \frac 1N \log \ee B_{\omega N}(C)\le
         \frac{t\omega}{\delta_1} h(\delta_1)-(t-1)h(\omega)+\gamma
   $$
where $\gamma\le (\log N)/m+(\log n)/n.$
\end{corollary}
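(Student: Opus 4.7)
The plan is to bypass the implicit variable $x_0$ of Theorem~\ref{thm:C2}(a) by re-deriving a slightly weaker but more transparent upper bound on $|U(\omega N, d_1)|$, and then inserting the result into the inequality (\ref{eq:U}).

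First, a crude combinatorial bound. A vector in $U(w,d_1)$ is determined by its set $S\subseteq V_1$ of active vertices (those with $\wt(x(v))\ge d_1$) together with some weight-$w$ subset of the $sn$ coordinates incident to $S$, where $s=|S|$. Dropping the per-vertex weight constraint but keeping the trivial counting constraints $w/n \le s \le w/d_1$, we obtain
\[
|U(w,d_1)| \;\le\; \sum_{s=\lceil w/n\rceil}^{\lfloor w/d_1\rfloor}\binom{m}{s}\binom{sn}{w}.
\]

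Next, bound the sum by $N$ times its largest term, apply the entropy estimate $\binom{M}{k}\le 2^{Mh(k/M)}$ to each binomial factor, and normalize by $N=mn$. Writing $\alpha=s/m\in[\omega,\omega/\delta_1]$ and $w=\omega N$,
\[
\frac{1}{N}\log |U(\omega N,d_1)| \;\le\; \max_{\omega\le\alpha\le\omega/\delta_1}\Bigl[\frac{1}{n}h(\alpha)+\alpha\, h(\omega/\alpha)\Bigr]\;+\;\frac{\log N}{N}.
\]

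The key step is the monotonicity claim: $\alpha\, h(\omega/\alpha)$ is strictly increasing in $\alpha$ on the relevant interval. Writing $y=\omega/\alpha$ and using $h'(y)=\log_2((1-y)/y)$, a direct computation yields the identity $h(y)-yh'(y)=-\log_2(1-y)$, so $\tfrac{d}{d\alpha}[\alpha h(\omega/\alpha)]=-\log_2(1-\omega/\alpha)>0$. The auxiliary term $n^{-1}h(\alpha)$ is at most $1/n\le (\log n)/n$. Hence the maximum is attained at the right endpoint $\alpha=\omega/\delta_1$, and
\[
\frac{1}{N}\log |U(\omega N,d_1)| \;\le\; \frac{\omega}{\delta_1}h(\delta_1)+\frac{\log n}{n}+\frac{\log N}{m}.
\]

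Finally, inserting this into (\ref{eq:U}) and using $\binom{N}{w}\ge 2^{Nh(\omega)}/(N+1)$ to turn the denominator into $-(t-1)h(\omega)$ plus an $O(\log N/N)$ error in the exponent gives
\[
\frac{1}{N}\log \ee B_{\omega N}(C) \;\le\; \frac{t\omega}{\delta_1}h(\delta_1)-(t-1)h(\omega)+\gamma,
\]
with $\gamma$ of the claimed order. The only nontrivial ingredient is the one-line monotonicity calculation for $\alpha h(\omega/\alpha)$; everything else is routine bookkeeping of entropy estimates.
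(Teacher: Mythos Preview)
Your argument is correct and reaches the same bound, but via a genuinely different route from the paper. The paper starts from the multinomial expression (\ref{eq:h}) established inside the proof of Theorem~\ref{thm:C2}(a), crudely bounds the multinomial entropy $h(\nu_0,\nu_{d_1},\dots,\nu_n)$ by $\log n$, and is then left with the linear program $\max\sum_\ell \nu_\ell\log\binom{n}{\ell}$ subject to $\sum_\ell \ell\nu_\ell=\omega n$; it solves this by LP duality, obtaining $\omega n\max_{d_1\le\ell\le n}(\log\binom{n}{\ell})/\ell$, and then invokes the fact that $h(x)/x$ is decreasing on $(0,1]$ to evaluate the maximum at $\ell=d_1$. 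You instead discard the fine multinomial structure at the outset, overcount $|U(w,d_1)|$ by $\sum_s\binom{m}{s}\binom{sn}{w}$, and arrive directly at a single-variable maximum of $\alpha h(\omega/\alpha)$, which you dispatch with the derivative identity $h(y)-yh'(y)=-\log_2(1-y)>0$. This is exactly the continuous form of the same monotonicity fact ($h(x)/x$ decreasing), so the two proofs share the same kernel; your packaging simply avoids the LP-duality detour and is a bit more self-contained.

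Two minor bookkeeping remarks. First, your error term drifts from $(\log N)/N$ to $(\log N)/m$ between displays; the former is what your counting actually gives, and since $(\log N)/N\le(\log N)/m$ the substitution is harmless but worth stating. Second, after multiplying by $t$ and absorbing the $\binom{N}{w}^{-(t-1)}$ lower bound, your $\gamma$ carries a factor of $t$ in front of $(\log n)/n$; the paper's own proof does too, so the discrepancy with the stated $\gamma\le(\log N)/m+(\log n)/n$ is a looseness in the corollary's statement rather than in your argument.
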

\begin{proof}
In (\ref{eq:h}) let us bound above $h(\cdot)$ by $\log n.$
Then
   $$
    \frac 1N\log|U(w,d_1)|^t \le
     \frac tn \max_{\begin{substack}{\nu_{d_1},\dots,\nu_n\\[1pt] 
         \sum \ell\nu_\ell=\omega n, }\end{substack}}
     \sum_{\ell=d_1}^n \nu_\ell\log\binom n\ell+\gamma.
%    +\frac {\log n}{n}.
  $$ 
Computing the maximum amounts to solving a linear programming problem 
whose dual is 
$$\omega nz\to\min $$%\text{ subject to the constraints }
$$\ell z\ge \log \binom n\ell, \ell=d_1,d_1+1,\dots, n;\; z\ge 0.
$$
Its solution is given by 
   $z^\ast=\omega n\max_{d_1\le \ell\le n} \log\binom n\ell/\ell.$
We obtain
   $$
     \frac 1N\log|U(w,d_1)|^t \le t  \omega 
     \max_{\delta_1 \le x\le 1}\frac {h(x)}x+\gamma
           \le t\omega h(\delta_1)/\delta_1+\gamma.
     $$
Employing (\ref{eq:U}) now completes the proof. 
\end{proof}

\remove{2. Let us provide a few more examples.\\

\begin{tabular}{llll} local code & $t=2$ & $t=3$ & $t=4$\\[3pt]
  $A[31,26,3]$ & &0.01027&0.01026
  \end{tabular}}

\subsection
{ Ensemble $\cC_3(t,H)$} 

\begin{theorem} \label{thm:fixed} Assume that $H$ is $\epsilon$-homogeneous.
For $m\to\infty$ the average weight distribution over the
ensemble of linear codes $\cC_3(t,H)$ satisfies $\ee B_{\omega N}\le 
2^{N(F+\gamma)}$ where
   $$
    F=-x_0(1-R)+x_0^t h\Big(\frac\omega{x_0^t}\Big), \quad\text{if }
    x_0< 1
  $$
  $$
    F=h(\omega)+R-1, \quad\text{if }x_0\ge 1,
  $$
where $x_0$ is the unique positive root of the equation 
  \begin{equation}\label{eq:root2}
  tx^{t-1}\log(x^t/(x^t-\omega))=1-R
  \end{equation}
$\gamma=t(n+\log m)/N+\epsilon.$
\end{theorem}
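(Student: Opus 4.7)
The plan is to compute $\ee B_{\omega N}$ directly, by summing $\Pr[x \in C]$ over all weight-$w$ vectors $x \in \{0,1\}^N$ with $w = \omega N$. I would fix such a vector and set $s_i(x) = |\{v \in V_i : x(v) \neq 0\}|$ with corresponding fractions $\alpha_i = s_i/m$. Since the local codes $A_v$ are independent random linear codes of rate $R_1$, and since $\Pr[y \in A_v] = 2^{-n(1-R_1)}$ for any fixed nonzero $y \in \{0,1\}^n$, one has
\[
\Pr[x \in C] = 2^{-n(1-R_1)(s_1 + \cdots + s_t)}.
\]
The identity $n(1-R_1) = n(1-R)/t$ that follows from (\ref{eq:rate-t}) shows that this probability depends on the profile only through $\sum_i s_i$.

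Next I would stratify by the support profile $(s_1, \ldots, s_t)$ and bound the number of weight-$w$ vectors with that profile. Any such vector has support contained in $E(D_1,\ldots,D_t)$ for some $D_i \subseteq V_i$ with $|D_i| = s_i$, so the count is at most $\prod_i \binom{m}{s_i} \cdot \max_{|D_i|=s_i} \binom{|E(D_1,\ldots,D_t)|}{w}$. The $\epsilon$-homogeneity hypothesis (\ref{eq:homog}) supplies $|E(D_1,\ldots,D_t)| \le N(\prod_i \alpha_i + \epsilon \min_{i<j} \sqrt{\alpha_i \alpha_j})$. Both $\prod_i \alpha_i$ and $\min_{i<j}\sqrt{\alpha_i\alpha_j}$ attain their maximum, for fixed $\sum_i \alpha_i$, at the symmetric point $\alpha_1 = \cdots = \alpha_t = x$, and concavity of $h$ places the $\prod_i \binom{m}{\alpha_i m}$ factor at its maximum there too. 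Hence the dominant contribution is from the symmetric profile, and after absorbing the polynomially many profile choices and the $\epsilon$-inflation of $|E|$ into $\gamma$, the estimate reduces to
\[
\frac{1}{N}\log \ee B_{\omega N} \le \max_{\omega^{1/t} \le x \le 1}\bigl\{-x(1-R) + x^t h(\omega/x^t)\bigr\} + \gamma.
\]

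Finally, I would locate the maximum. Writing $x^t h(\omega/x^t) = -\omega\log\omega + tx^t\log x - (x^t - \omega)\log(x^t - \omega)$, a direct differentiation yields
\[
\frac{d}{dx}\bigl[-x(1-R) + x^t h(\omega/x^t)\bigr] = -(1-R) + tx^{t-1}\log\frac{x^t}{x^t - \omega},
\]
so the critical equation is exactly (\ref{eq:root2}). Uniqueness of the positive root $x_0$ follows from the elementary inequality $(t-1)\log(1+u) < tu$ for $u > 0$, which implies that the right-hand side of (\ref{eq:root2}) decreases strictly from $+\infty$ at $x = \omega^{1/t}$ to $0$ at $x = \infty$. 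If $x_0 < 1$ the interior critical point yields the first displayed formula for $F$; if $x_0 \ge 1$ the optimizer sits at the endpoint $x = 1$, producing $F = h(\omega) + R - 1$. The main obstacle I anticipate is the careful propagation of the $\epsilon$-homogeneity slack: one must verify that replacing $N\prod_i \alpha_i$ by $N\bigl(\prod_i \alpha_i + \epsilon \min_{i<j}\sqrt{\alpha_i\alpha_j}\bigr)$ in the argument of the binomial coefficient perturbs the exponent by at most $O(\epsilon)N$ uniformly in the admissible profiles, which rests on a Lipschitz estimate for the map $y \mapsto y\,h(\omega/y)$ near $y = x_0^t$.
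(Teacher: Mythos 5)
Your proposal is correct and reaches the same critical equation, case split and final formula, but the key symmetrization step is argued along a genuinely different line than the paper. The paper passes to the objective $\phi(\beta_1,\dots,\beta_t)=\prod_i\beta_i\,h\bigl(\omega/\prod_i\beta_i\bigr)-(1-R_1)\sum_i\beta_i$, computes its Hessian explicitly (expressing $H_\phi$ as a negative combination of a rank-one matrix $zz^T$ and a positive diagonal matrix) to prove strict concavity, and then uses the permutation symmetry of the convex domain to place the maximizer on the diagonal. Your route avoids the Hessian altogether: for fixed $\sum_i\alpha_i$, the probability factor is constant, $\prod_i\binom{m}{\alpha_i m}$ is maximized at the symmetric profile by concavity of $h$, and the edge-count bound $\prod_i\alpha_i+\epsilon\min_{i<j}\sqrt{\alpha_i\alpha_j}$ is also maximized there, so the monotone factor $\binom{|E|}{w}$ is too — hence the product is. To make this airtight you should record the short lemma that $\min_{i<j}\sqrt{\alpha_i\alpha_j}\le\tfrac1t\sum_i\alpha_i$: order the $\alpha_i$, set $a=(\alpha_1+\alpha_2)/2$, note $\alpha_1\alpha_2\le a^2$ and $a\le\alpha_2\le\alpha_j$ for $j\ge 3$, so $\sum_i\alpha_i\ge ta$. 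Your uniqueness argument for $x_0$, reducing to $(t-1)\log(1+u)<tu$ with $u=\omega/(x^t-\omega)$, is a clean proof of a monotonicity fact the paper states without justification. The residual concern you flag about propagating the $\epsilon$-slack through $y\mapsto y\,h(\omega/y)$ is legitimate but equally present in the paper's argument; since the optimizer $x_0^t$ is bounded away from $\omega$ when $x_0<1$, and the $x_0\ge1$ case replaces $y$ by $1$, the map is locally Lipschitz at the relevant point and the perturbation is indeed $O(\epsilon)$, absorbable into $\gamma$.
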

\begin{proof} Let $C\in \cC_3(t,H)$ and let 
$x\in\{0,1\}^N$ be a nonzero vector. 
Denote by $B_i$ the set of nonzero vertices of $x$ in the part
$V_i, i=1,\dots,t.$ Let $E=|E(B_1,B_2,\dots,B_t)|.$ Let 
$b_i=|B_i|, \beta_i=b_i/m,$ then the probability that $x\in C$ 
equals $2^{-(1-R_1)N\sum_i \beta_i}.$
Assume w.l.o.g. that $\beta_1<\beta_2<\dots<\beta_t.$ The average number
of vectors of weight $w=\omega N$ in the code $C$ can be bounded above as
  $$
    \ee B_{w}\le \sum_{\omega m\le b_1,b_2,\dots, b_t\le m}
   \binom{\prod_{i=1}^t b_i +\epsilon\sqrt{b_1b_2}}{w}
   \prod_{i=1}^t\binom n{b_i}2^{-(1-R_1)N\sum_i \beta_i}.
  $$
Then
  $$
   \frac 1N\log \ee B_{\omega N}\le \max_{\begin{substack}{\omega\le \beta_i\le 1\\
    \prod_i\beta_i\ge \omega}\end{substack}}
   \Big\{\prod_{i} \beta_i h\Big(\frac\omega{\prod_i
   \beta_i}\Big)-(1-R_1)\sum_i \beta_i\Big\}+\gamma
  $$
Let $\phi(\beta_1,\dots,\beta_t)$ be the function in the brackets in the
last expression.
%
%  $$
% F(\alpha_1,\dots,\alpha_t)=\prod_{i=1}^t \beta_i h \Big(\frac\omega{\prod_i
%   \beta_i}\Big)-(1-R_1)\sum_i \beta_i
%  $$
Let us prove that $\phi$ is concave in the domain
$\cD=\prod_i [\omega,1]\cap\{(\beta_1,\dots,\beta_t):\prod_i\beta_i\ge \omega\}$. Computing its Hessian matrix, we obtain
   $$
     H_\phi=-\log e\left[\begin{matrix} \frac{s_1}{\beta_1^2}
         &\frac{s_2}{\beta_1\beta_2}&\dots&\frac{s_2}{\beta_1\beta_t}\\
     \frac{s_2}{\beta_2\beta_1}&\frac{s_1}{\beta_2^2}&\dots
      &\frac{s_2}{\beta_2\beta_t}\\
     \vdots&\vdots&\ddots&\vdots\\
   \frac{s_2}{\beta_t\beta_1}&\frac{s_2}{\beta_t\beta_2}&\dots
           &\frac{s_1}{\beta_t^2}
   \end{matrix}\right]
  $$
where
  $$
    s_1=\frac{\omega \prod_{i} \beta_i  }{\prod_i \beta_i-\omega}
  $$
  $$
   s_2=s_1    +\prod_i\beta_i\ln\Big(1-\frac\omega{\prod_i\beta_i}\Big).
  $$
The matrix $H_\phi$ can be written as
  $$
 H_\phi=-\log e(s_2 z z^t+(s_1-s_2)
\text{diag}(\beta_1^{-2},\dots,\beta_t^{-2}))
  $$
where $z=(1/\beta_1,\dots,1/\beta_t)^t$ and diag$(\cdot)$ denotes
a diagonal matrix. We wish to prove that $H_\phi$ is negative definite
for $\beta_i>0, 0<\omega<\prod_i \beta_i.$
Clearly, $s_1> s_2,$ and therefore the claim will follow if we show that 
$s_2>0.$
This is indeed true because letting $Q=\prod_i \beta_i$ and using
the inequality $x>\ln(1+x)$ valid for $x>-1, x\ne 0,$ we have
  $$
    s_2=Q\Big(\frac\omega{Q-\omega}
      +\ln\frac{Q-\omega}{Q}\Big)> Q\Big(\ln\Big(1+\frac\omega{Q-\omega}\Big)
          +\ln\frac{Q-\omega}{Q}\Big)=0.
  $$

We will now show that the maximum of $\phi$ in $\cD$ is attained on the line $\ell$ given by 
$\beta_1=\beta_2=\dots=\beta_t.$
Note that $\cD$ is an 
intersection of convex domains and therefore itself convex.
\remove{
To show that the region 
$\{(\beta_1,\dots,\beta_t):\prod_i\beta_i\ge \omega\}$ is convex, we again compute the Hessian matrix of the function
$\omega/\prod_i\beta_i$ and observe that it is a constant multiple
of a sum of two matrices, one of which is a diagonal matrix with positive
entries and the other a rank-one matrix.}
Moreover, the domain $\cD$ is also symmetric in the sense that together with
any point $p=(\beta_1,\dots,\beta_t)$ it also contains all the points
obtained from $p$ by permuting its coordinates, and the value of $\phi$
at each of these points is the same and equal to $\phi(p).$ 
Because $\phi$ is strictly concave, for any point $p\in\cD,p\not\in\ell$ 
it is possible to find a point $q$ such that 
$\phi(q)>\phi(p)$ (any point $q$ on the segment between $p$
and one of its symmetric points will do). 
This shows that the global maximum of $\phi$ in $\cD$
is attained on $\ell$ including possibly the point 
$\beta_1=\dots=\beta_t=1.$ 
Thus, we obtain
  $$
   \frac 1N\log \ee B_{w}\le \max_{\omega^{1/t}\le x\le 1}
   \{-(1-R)x +x^t h\Big(\frac\omega{x^t}\Big)\}+\gamma.
  $$ 
The maximum of this expression on $x$ is attained for $x$ determined
from (\ref{eq:root2}). This equation has a unique positive root $x_0$ because
the left-hand side is a falling function of $x$ that takes all positive 
values for $x\in(\omega^{1/t},\infty).$ 
This concludes the proof.
\end{proof}

This theorem implies the following result.
\begin{corollary}\label{cor:fixed} For all values of the code rate satisfying
$R\ge \log (2(1-\dgv(R))^t),$ almost all codes in the ensemble $\cC_3(t)$
approach the GV bound as $N\to\infty.$
\end{corollary}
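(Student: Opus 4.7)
The overall strategy is to invoke the averaging Lemma from the beginning of Section II together with Theorem \ref{thm:fixed}: for any target $\omega_0<\dgv(R)$, I would show that the exponent $F(\omega)+\gamma$ is strictly negative throughout $\omega\in[0,\omega_0]$. Since the construction recalled earlier produces $\epsilon$-homogeneous hypergraphs with $\epsilon$ arbitrarily small (and $t(n+\log m)/N\to0$ as $m\to\infty$), the $\gamma$ contribution can be absorbed, so what really must be established is $\max_{\omega\le\omega_0}F(\omega)<0$. Letting $\omega_0\uparrow\dgv(R)$ then gives the corollary.

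The heart of the argument is to recognise that in the stated rate regime we are handed ``Case 2'' of Theorem \ref{thm:fixed} on a left neighbourhood of $\dgv(R)$. As noted in the proof of the theorem, the left-hand side of (\ref{eq:root2}) is strictly decreasing in $x$ on $(\omega^{1/t},\infty)$; evaluating at $x=1$ shows that the transition $x_0=1$ corresponds to $\omega=\omega_\star:=1-2^{(R-1)/t}$. Now I would rewrite the hypothesis
$$R\ge\log_2\bigl(2(1-\dgv(R))^t\bigr)\iff 2^{(R-1)/t}\ge 1-\dgv(R),$$
which is precisely $\omega_\star\le\dgv(R)$. Hence for every $\omega\in[\omega_\star,\dgv(R))$ we are in Case 2, where
$F(\omega)=h(\omega)+R-1<h(\dgv(R))+R-1=0.$

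To extend negativity of $F$ from $[\omega_\star,\omega_0]$ to the whole interval $[0,\omega_0]$, I would verify that $F$ is monotonically increasing in $\omega$ on $[0,\dgv(R))$. In Case 2 this is immediate from $F'(\omega)=h'(\omega)>0$ for $\omega<1/2$. In Case 1 the envelope theorem applied to the inner optimization $x\mapsto -(1-R)x+x^t h(\omega/x^t)$ yields $F'(\omega)=\log_2\bigl((x_0^t-\omega)/\omega\bigr)>0$ in the relevant range (a brief check using (\ref{eq:root2}) confirms $\omega/x_0^t<1/2$ there). Continuity of $F$ at the transition $\omega=\omega_\star$ is automatic since both formulas agree there, so $\max_{\omega\le\omega_0}F(\omega)=F(\omega_0)<0$, as required.

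The only real obstacle is the algebraic bookkeeping of the second paragraph, namely checking that the rewritten rate hypothesis matches \emph{exactly} the threshold separating the two cases of Theorem \ref{thm:fixed}; everything else is a routine combination of continuity, monotonicity, and the averaging lemma.
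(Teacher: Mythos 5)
Your overall strategy (averaging lemma plus Theorem \ref{thm:fixed}, reduce to showing $F(\omega)<0$ on $(0,\dgv(R))$, and identify the threshold $\omega_\star=1-2^{(R-1)/t}$ via substituting $x=1$ in \eqref{eq:root2}) is the same as the paper's, and your algebraic identification of the rate condition with $\omega_\star\le\dgv(R)$ is correct. The problem is the last step, where you try to propagate negativity from Case~2 to Case~1 by claiming $F$ is increasing on $[0,\dgv(R))$.

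That monotonicity claim is false. Your envelope computation $F'(\omega)=\log_2\bigl((x_0^t-\omega)/\omega\bigr)=h'(\omega/x_0^t)$ is right, but the ``brief check'' that $\omega/x_0^t<1/2$ in Case~1 does not hold. As $\omega\to0^+$, the root $x_0(\omega)$ of \eqref{eq:root2} collapses onto the lower endpoint $\omega^{1/t}$ of the feasible interval: the left side of \eqref{eq:root2} tends to $0$ for any fixed $x>0$ as $\omega\to0$, so to keep it equal to $1-R>0$ one must let $x_0\downarrow\omega^{1/t}$. Consequently $\omega/x_0^t\to1>1/2$, $h'(\omega/x_0^t)<0$, and $F'(\omega)\to-\infty$; in fact $F$ is steeply \emph{decreasing} near $\omega=0$. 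A concrete counterexample: for $t=2$, $R=0$, $\omega=0.1$, equation \eqref{eq:root2} gives $x_0\approx0.425$, hence $\omega/x_0^2\approx0.55>1/2$ and $F'(0.1)<0$. Since $F(0)=0$ and $F$ decreases from there, the claim ``$F(\omega)<F(\omega_\star)$ for $\omega<\omega_\star$ because $F$ is increasing'' is unsupported; the sign of $F$ in Case~1 must be established by a different argument.

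A correct route (which the paper leaves implicit) is to use \eqref{eq:root2} to eliminate $1-R$ from the Case~1 value of $F$: writing $y=\omega/x_0^t$, one finds $F(\omega)<0\iff h(y)<-t\log_2(1-y)$. As $\omega$ decreases from $\omega_\star$ to $0$, $y$ runs from $\omega_\star$ up to $1$. The auxiliary function $g(y)=h(y)+t\log_2(1-y)$ vanishes at $y=0$, initially increases, and then decreases to $-\infty$, so $g(y)<0$ on $[\omega_\star,1)$ as soon as $g(\omega_\star)<0$; and $g(\omega_\star)=h(\omega_\star)-(1-R)<0$ is exactly the hypothesis $\omega_\star<\dgv(R)$. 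Replacing the (incorrect) monotonicity-of-$F$ step with this argument repairs the proof.
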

\begin{proof}
From the previous theorem, the GV bound is met for the first time
when $x_0$ becomes 1. Substituting 1 in (\ref{eq:root2}), we obtain
a condition on $\omega$ in the form $\omega=1-2^{(R-1)/t}.$
As long as this value is less than $\dgv(R),$ the ensemble-average relative 
distance approaches $\dgv(R)$ as $N\to\infty.$
\end{proof}

We note that the condition for the attainment of the GV bound turns out
to be the same as for the ensemble $\cC_1(t)$ constructed from random
graphs. The $\epsilon$-homogeneity condition, and in particular,
the expander mixing lemma for bipartite graphs are known to approximate 
the behavior of random graphs. This approximation turns out to be good
enough to ensure that both ensembles contain GV codes in the same
interval of code rates. Moreover, for small weights the average number
of codewords for the ensemble $\cC_3(t,H)$ turns out to be smaller 
than for the ensemble $\cC_1(t).$
This is illustrated in 2 examples in Fig.~\ref{fig:spectra}.

\begin{figure}[tH]
\epsfysize=4cm
\setlength{\unitlength}{1cm}
\begin{center}\vspace*{1cm}
\begin{picture}(5,5)
\put(-5,1){\epsffile{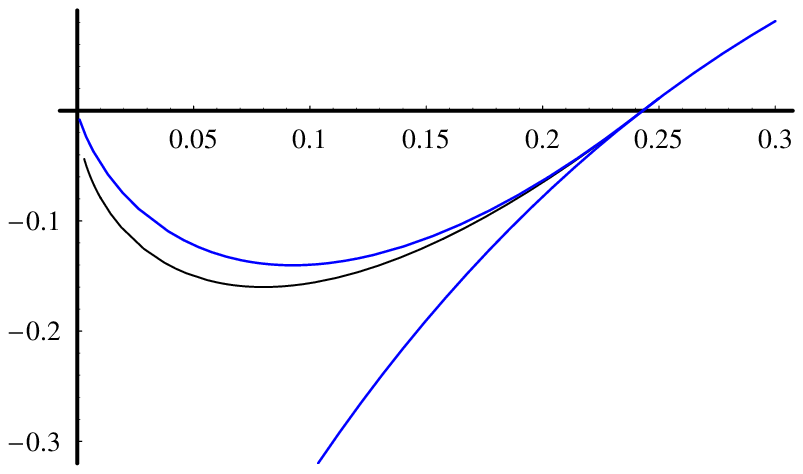}}\put(3,1){\epsffile{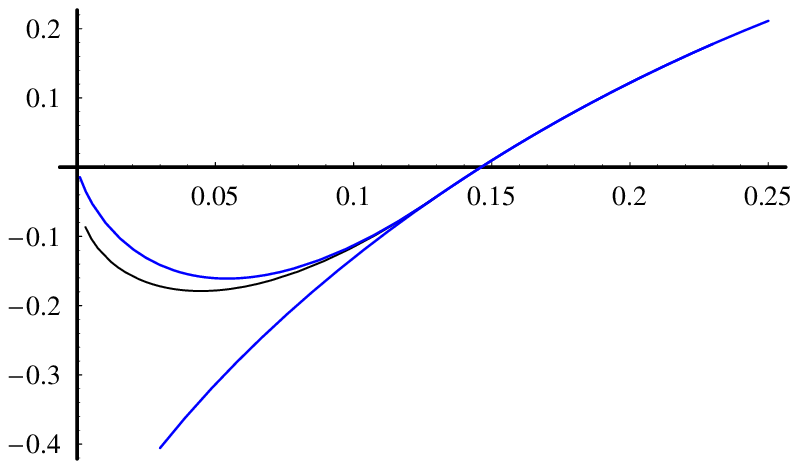}}
\put(6.5,0.5){{ (II)}}
\put(-2.9,2.3){{\scriptsize (a)}}\put(-2.9,3){{\scriptsize (b)}}
\put(4.6,2.3){{\scriptsize (a)}}\put(4.6,2.9){{\scriptsize (b)}}
\put(5.7,2.4){{\scriptsize (c)}}\put(-1.4,2.3){{\scriptsize (c)}}
\put(-1.7,0.5){{ (I)}}
\end{picture}
\vskip-5mm
\caption{Average weight spectra for ensembles of graph codes:
(I) $t=2,R=0.2$, (II) $t=3,R=0.4$; (a) ensemble $\cC_3(2,H)$, (b) ensemble 
$\cC_1(2)$, (c) ensemble of random linear codes.}
\label{fig:spectra}
\end{center}
\end{figure}

\remove{\vspace*{3ex}
\hspace*{0in}{\psfig{file=figR2t2-1.eps,width=3in,silent=}}
\hspace*{0in}{\psfig{file=figR4t3-1.eps,width=3in,silent=}}

\centerline{Fig. 1. Average weight spectra for ensembles $\cC_1,\cC_3$}
\centerline{and the ensemble of random linear codes}
\vspace*{1ex}
\hspace*{1in}(a) $t=2,R=0.2$\hspace*{2.3in}(b) $t=3,R=0.4$

\vspace*{3ex}}

For $t=2$ codes in the ensembles $\cC_3$ and $\cC_1$ 
reach the GV bound for code rates $R\le 0.202$. 
For $R>0.202$ the codes are still asymptotically good, although 
slightly below the GV bound. For these values of the rate, the 
average relative distance for the ensemble $\cC_3$ is greater than for
the ensemble $\cC_1$ as shown by the following numerical examples.

\bigskip
\begin{tabular}{lllll}
$R$ &0.3&0.5&0.7&0.9\\
$\cC_1(2)$ & 0.18558&0.09276&0.03211&0.00337\\
$\cC_3(2,H)$ &0.18605&0.09492&0.03242&0.00380
\end{tabular}

\bigskip
Similar relations between the weight spectra and distances of the ensembles
$\cC_1(t), \cC_3(t,H)$ hold also for larger values of $t$.

\section{Decoding}
For the case of a code $C(G,A)$ on a bipartite graph $G$, decoding can
be performed by a natural algorithm \cite{zem01} 
that alternates between parallel
decoding of local codes in the parts $V_1$ and $V_2$ until, hopefully,
it converges to a fixed point. In this algorithm, the most current value
of each edge (bit) is stored at the vertex in the part decoded in the most
recent iteration.
However, pursuing such an edge-oriented procedure is difficult for $t> 2.$
In \cite{bil04} the following alternative is suggested: starting from
the values of the bits stored on the edges of $H$
decode in parallel all local codes in {\em all} parts of $H$ and for 
each $v\in V$ form an independent decision about the codeword of $A$ that 
corresponds to the edges $E(v)$. Next, the values of the bits at every
vertex are updated, so that now every vertex stores an independent opinion
of its bits' values. For the update, the value of the bit $x_e(v)$ is set
to the majority value of the decoded versions of this bit at all the vertices
$v'\in e\backslash v$, where $e\ni v$ is an edge (for this to be well-defined,
the values of $t$ are assumed to be even).
The decoding then iterates, repeating this parallel
decoding round until all the vertices agree on all bits.

In \cite{bil04} this algorithm is shown to correct all patterns of errors
provided that their proportion, as a fraction of
the blocklength $N$, is less than
\begin{equation}
  \label{eq:bh}
  \binom{t-1}{t/2}^{-2/t}\left(\frac{\delta_1}{2}\right)^{(t+2)/t}-c_2(\epsilon,\delta_1,t)
\end{equation}
where $c_2(\epsilon,\delta_1,t)\rightarrow 0$ as $\epsilon\rightarrow 0$.
This algorithm consists of $\log N$ iterations, each of which has
serial running time linear in the blocklength $N$.
Its analysis relies on the $\epsilon$-homogeneous property of $H.$

For fixed values of $t>2$, if one thinks of $\delta_1$ as
a variable quantity, then the number of correctable errors in
\eqref{eq:bh} is not a constant fraction of the designed distance
\eqref{eq:dist-t}. For example, for $t=4$, \eqref{eq:bh}
gives a decoding radius equal to $N$ times the fraction
$$\frac{\delta_1^{3/2}}{2\sqrt{6}}.$$
For small $\delta_1$ this is a much smaller quantity than the 
designed distance $\delta_1^{4/3}N$. 
This consideration is reinforced by the fact
that advantages of hypergraph codes are most pronounced for small 
values of the distance $\delta.$

Our objective is to propose an alternative
decoding strategy that decodes a constant fraction of the designed
distance.

For every $i$, we shall define a {\em $i$-th subprocedure} that
decodes the subcode $A$ on
every vertex belonging to the vertex
set $V_i$. We shall claim that if the initial number of errors is less
than a bound that we shall introduce, then {\em for at least one} $i$,
the $i$-th subprocedure applied to the initial error pattern
produces a pattern with a smaller number of errors.

Let us now describe the decoding procedure in more detail.
For every vertex $v$, and the associated subspace $\{0,1\}^n$
where coordinates are indexed by the edges incident to $v$, we
will use the following {\em threshold decoding} procedure $T_\kappa$
of the constituent code $A$.
This means that we introduce a number $\kappa\geq 2$, to be optimized
later, and that we decode a vertex subcode {\em only if} its Hamming
distance to the closest codeword is less or equal to $\theta=d_1/\kappa$.
If every codeword of $A$ is at distance more than $d_1/\kappa$ we
leave the subvector untouched. Let $V_i=(v_{i,1},\dots,v_{i,m})$ be the $i$th component of $H$. 
Given an $N$-vector $z=(z(v_{i,1}),\dots,z(v_{i,m})),$ we can decode each
of the $m$ of its subvectors with $T_\kappa$, obtaining an $N$-vector
$w$. Abusing notation, we will write $w=T_\kappa(z).$
The {\em $i$-th subprocedure}
now consists of applying $T_\kappa$ to the component $V_i$. 

As mentioned above, we shall claim that one among $t$ of the $i$-th
subprocedures lowers the total number of errors. However the decoding
algorithm will not be able to discern which of the $i$-th
subprocedures is successful. So the decoder will apply all
$t$ subprocedures in parallel to the received vector, yielding $t$ 
output vectors. The next decoding iteration will have to be applied
to every output of the preceding iteration, so that $s$ iterations
of the algorithm will yield $t^s$ output vectors. We will
only apply the algorithm for a constant number of iterations however,
until we are guaranteed that the number of remaining 
error for at least one of the $t^s$ outputs has fallen below
the error-correcting capability of Bilu and Hoory's decoding
procedure. We then let the latter decoder take over and decode
all $t^s$ candidates. At least one of them is guaranteed to be the
closest codeword, and it can be singled out simply by
computing the Hamming distance of every
candidate to the initial received vector.

To give a more formal description of the algorithm, suppose that 
$y\in\{0,1\}^N$ is the vector received from the channel.
In each iteration the processing is done in parallel in all the vertices
of $H.$ Let $\sY_i^{j}=\{y_{i,l}^{(j)}\}$ 
be the set of $N$-vectors stored at the vertices
of the component $V_i$ before the $j$th iteration. By the discussion above, 
$|\sY_i^{j}(v)|\leq t^{j-1}.$ 

We begin by setting $\sY_i^1=\{y\}$ for all $i$.
Iteration $j,j=1,2,\dots,s$ consists of running $t$ parallel subprocedures.
The $i$th subprocedure applies decoder $T_\kappa$ to every vector  
$y_{i,l}^{(j)}$ in the set $\sY_i^{j},$ replacing it with the vector
$T_\kappa(y_{i,l}^{(j)}), l=1,\dots,|\sY_i^j|$. The outcome of this step creates
$t$ potentially different decodings of every vector $y_{i,l}^{(j)}\in
\sY_i^{j}, i=1,\dots, t$.
In the second part of the iteration we form the sets 
$\sY_i^{j+1}, i=1,\dots, t$ 
by replacing each vector $y_{i,l}^{(j)}\in
\sY_i^{j}$ with its decodings obtained in all
the $t$ subprocedures.

\remove{
The vertex $v$ is connected by $n$ edges to some vertices
$(v_{i_{2,1}},\dots,v_{i_{2,n}})\in V_2,
\dots,(v_{i_{t,1}},\dots,v_{i_{t,n}})\in V_t.$
Upon decoding of all the vertices of $H$, the vertex $v$ observes
the vector $y^{2}_1(v)$ obtained by the threshold decoding of $y^1(v)$
as described above plus $t-1$ vectors received from the vertices
$v_{i_{j,l}}, j=1,\dots,t,\,l=1,\dots,n$ along the edges in $E(v).$
Thus each vertex now stores $t$ vectors $y^{2}_i(v), i=1,\dots,t$.
The second round of parallel decoding will be applied to the $t$ vectors
at every vertex $v\in V$, and so on.}

Next, we prove that one of the $t$ subprocedures will actually diminish
the number of errors. This analysis also relies on $\epsilon$-homogeneity,
although in a way different from \cite{bil04}. 
Let $\cE$ be the set of coordinates, i.e. the set of edges, that
are in error. For every $i=1\ldots t$, let us partition the set
of vertices in $V_i$ that are incident to $\cE$ into three
subsets, $G_i,N_i,B_i$. The set $G_i$ is the subset of vertices
that will be correctly decoded, $N_i$ is the subset of 
vertices that are left untouched by the threshold decoder, and
$B_i$ is the set of those vertices that are wrongly decoded to a parasite
codeword of $A$. The situation is summarized in Figure~\ref{fig:Vi}. From 
now on by the {\em $\cE$-degree} of a vertex we shall mean
the degree of this vertex in the subhypergraph induced by the edge set $\cE$.
It should be clear that every vertex of $G_i$ has $\cE$-degree not more
than $d_1/\kappa$, every vertex in $N_i$ has $\cE$-degree at least
$d_1/\kappa$, and every vertex in 
$B_i$ has $\cE$-degree at least $(\kappa -1)d_1/\kappa$.

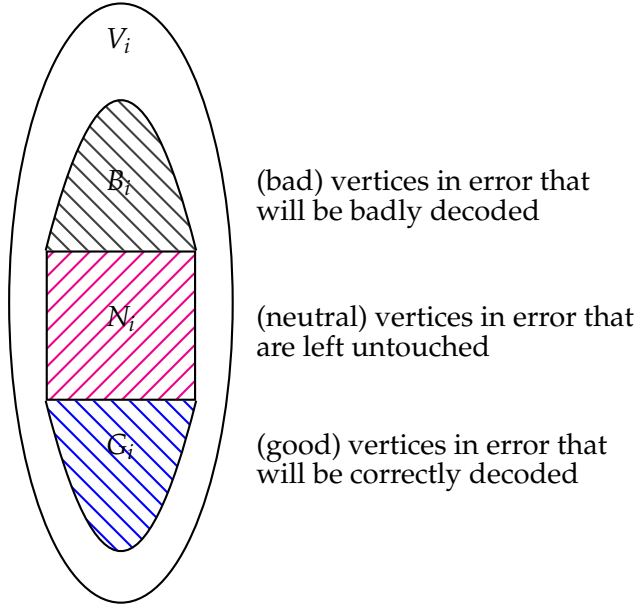
\begin{figure}
  \centering
   \begin{pspicture}(8,8.5)
\parabola[fillstyle=vlines,hatchcolor=darkgray](0.5,4.7)(1.5,6.7)

  \parabola[fillstyle=vlines,hatchcolor=blue](0.5,2.7)(1.5,0.7)

  \psframe[fillstyle=hlines,hatchcolor=magenta](0.5,2.7)(2.5,4.7)

  \psellipse(1.5,4)(1.5,4)

   \put(1.3,3.7){$N_i$}\put(1.3,2){$G_i$}\put(1.3,5.5){$B_i$}
   \put(1.3,7.4){$V_i$}
   \put(3.3,5.5){(bad) vertices in error that} 
   \put(3.3,5.1){will be badly decoded} 
   \put(3.3,3.7){(neutral) vertices in error that} 
   \put(3.3,3.3){are left untouched}
   \put(3.3,2){(good) vertices in error that} 
   \put(3.3,1.6){will be correctly decoded}
\end{pspicture}
  \caption{Details of the set of vertices incident to edges in error.
  The max $\cE$-degree in $G_i$ is less than $d_1/\kappa$,
  the min $\cE$-degree in $B_i$ is at least $(\kappa -1)d_1/\kappa$,
  the min $\cE$-degree in $N_i$ is at least $d_1/\kappa$.}
  \label{fig:Vi}
\end{figure}

We use the shorthand notation $\cE(G_i)$ to mean the set of edges
that has one of its endpoints in $G_i$. Similarly we shall write
$\cE(N_i)$ and $\cE(B_i)$. 

\begin{lemma}\label{lem:|E|/k}
  If the $i$-th decoding subprocedure introduces more errors than
it removes, then $|\cE(G_i)| \leq |\cE|/\kappa.$
Moreover, if 
    $$
      \mu_i=\frac{|\cE(N_i)|}{|\cE(N_i) \cup  \cE(B_i)|}, \quad
i=1,\dots,t
  $$
then
   $$
    |\cE(G_i)|\le\frac{1-\mu_i}{\kappa-\mu_i}|\cE|.
   $$ 
\end{lemma}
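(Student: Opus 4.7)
The plan is to account for the change in total number of errors after applying subprocedure $i$, expressed as a signed sum of contributions from vertices in $G_i$, $N_i$, and $B_i$. The key accounting principle is that in a $t$-partite hypergraph each edge has exactly one endpoint in $V_i$, so the subvectors $\{x(v):v\in V_i\}$ partition the coordinate set; in particular, $|\cE(G_i)|$, $|\cE(N_i)|$, $|\cE(B_i)|$ are disjoint and their sum is $|\cE|$.

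First I would analyze the per-vertex contributions. Vertices in $G_i$ get decoded to the correct codeword, so the errors on the incident edges are completely removed, contributing a decrease of $|\cE(G_i)|$. Vertices in $N_i$ are left untouched, contributing $0$. For $v\in B_i$, the threshold decoder replaces the subvector by a parasite codeword $c'\in A$ with $c'\ne c(v)$ satisfying $d(y(v),c')\le d_1/\kappa$. By the triangle inequality and minimum distance,
\[
d(c(v),c')\le d(c(v),y(v))+d(y(v),c')\le e(v)+d_1/\kappa,
\]
so the number of incident-edge errors rises by at most $d_1/\kappa$ at each such $v$, giving a total increase of at most $|B_i|\,d_1/\kappa$.

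Next I would convert the vertex bound $|B_i|\,d_1/\kappa$ into an edge bound using the lower bound on $\cE$-degrees in $B_i$: since every $v\in B_i$ has $\cE$-degree at least $(\kappa-1)d_1/\kappa$, we get $|\cE(B_i)|\ge |B_i|(\kappa-1)d_1/\kappa$, whence the total number of errors introduced is at most $|\cE(B_i)|/(\kappa-1)$. The hypothesis that the subprocedure introduces at least as many errors as it removes therefore yields
\[
|\cE(G_i)|\le \frac{|\cE(B_i)|}{\kappa-1},
\qquad\text{i.e.}\qquad (\kappa-1)|\cE(G_i)|\le |\cE(B_i)|.
\]

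Finally I would feed this into the identity $|\cE|=|\cE(G_i)|+|\cE(N_i)|+|\cE(B_i)|$. Dropping $|\cE(N_i)|\ge 0$ immediately gives $\kappa|\cE(G_i)|\le |\cE|$, proving the first claim. For the refined bound, I would substitute $|\cE(B_i)|=(1-\mu_i)\bigl(|\cE(N_i)|+|\cE(B_i)|\bigr)=(1-\mu_i)\bigl(|\cE|-|\cE(G_i)|\bigr)$ into $(\kappa-1)|\cE(G_i)|\le |\cE(B_i)|$ and solve for $|\cE(G_i)|$ to obtain the stated factor $(1-\mu_i)/(\kappa-\mu_i)$. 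The only place where one has to be slightly careful is in translating the per-vertex bound at $B_i$ into an edge-based inequality, which is where the defining lower bound on $\cE$-degrees in $B_i$ must be invoked; otherwise the remainder is linear algebra in the three unknowns $|\cE(G_i)|,|\cE(N_i)|,|\cE(B_i)|$.
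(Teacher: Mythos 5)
Your proof is correct and follows essentially the same route as the paper: bound the errors introduced at $B_i$ by $|B_i|\,d_1/\kappa$, convert to $|\cE(B_i)|/(\kappa-1)$ via the minimum $\cE$-degree in $B_i$, and feed $(\kappa-1)|\cE(G_i)|\le|\cE(B_i)|$ into the partition $|\cE|=|\cE(G_i)|+|\cE(N_i)|+|\cE(B_i)|$. The only difference is that you spell out the triangle-inequality estimate at a bad vertex explicitly, which the paper leaves implicit in the phrase ``by the definition of $\kappa$ and the threshold decoder.''
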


\begin{proof} The first part of the lemma follows from the second part,
which is proved as follows.
We bound from above $|\cE(G_i)|$, the set of edges removed,
by the set of edges added, $|\cE(B_i)|$: we get
  \begin{eqnarray*}
      |\cE(G_i)| & \leq & |B_i|\frac{d_1}{\kappa} =
      |B_i|d_1\left(1-\frac{1}{\kappa}\right)\frac{1}{\kappa -1}\\
      & \leq & |\cE (B_i)|\frac{1}{\kappa -1}.
    \end{eqnarray*}
The first inequality comes from the definition of $\kappa$ and the threshold
decoder. The second inequality states that $(1-1/\kappa)d_1$ is a
lower bound on the minimum $\cE$-degree in $B_i$. We now have
\begin{equation}
  \label{eq:kappa}
  |\cE|=|\cE(G_i)|+|\cE(N_i)|+|\cE(B_i)|= 
      |\cE(G_i)|+|\cE(B_i)|/(1-\mu_i)
%  |\cE|\geq |\cE(G_i)| + |\cE(B_i)| \geq |\cE(G_i)| + (\kappa
%-1)|\cE(G_i)|
\end{equation}
  $$
    \ge \frac{\kappa-\mu_i}{1-\mu_i}|\cE(G_i)|
  $$
which proves the lemma.
%$$|\cE(G_i)| \leq \frac{1-\mu_i}{\kappa-\mu_i}|\cE|,$$ 
%proving the lemma.
\end{proof}
 
%We come now to the crucial part of the analysis that involves
%the $\epsilon$-homogeneity of the hypergraph. 

  \begin{theorem}\label{th:e<}
    For any $\alpha >0$, if the number of errors $eN$ is such that
    \begin{equation}
      \label{eq:e<}
          e\leq(1-\alpha)\frac{\delta_1^{t/(t-1)}}{(t+1)^{(t+1)/(t-1)}}
  \end{equation}
    they can be corrected in time $O(N\log N$).
  \end{theorem}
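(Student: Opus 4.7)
The plan is to prove that one round of the described parallel threshold decoding, with threshold parameter $\kappa=t+1$, yields at least one output branch whose error count is a fixed factor $\rho<1$ smaller than the current one; a constant $s=s(\alpha,t,\delta_1)$ of such rounds then brings the error count on one of the $t^s=O(1)$ branches below the Bilu--Hoory decoding radius, after which their algorithm finishes the job on every branch and we output the codeword closest to $y$.

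The key lemma, a geometric strengthening of Lemma~\ref{lem:|E|/k}, reads: for $\kappa=t+1$ and every fixed $\alpha>0$ there is $\rho=\rho(\alpha,t)<1$ such that, whenever $e=|\cE|/N$ satisfies (\ref{eq:e<}), at least one of the $t$ subprocedures produces an error set $\cE'$ with $|\cE'|\leq\rho|\cE|$. I would prove this by contradiction. If subprocedure $i$ fails to achieve a $\rho$-reduction, then, since new errors introduced at each $v\in B_i$ are at most $d_1/\kappa$ (triangle inequality) and $|B_i|d_1/\kappa\leq|\cE(B_i)|/(\kappa-1)$, one deduces $|\cE(G_i)|\leq(1-\rho)|\cE|+|\cE(B_i)|/(\kappa-1)$. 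Using the partition $|\cE|=|\cE(G_i)|+|\cE(N_i)|+|\cE(B_i)|$ to write $|\cE(B_i)|\leq|\cE|-|\cE(G_i)|$ and rearranging yields $|\cE(G_i)|\leq c'|\cE|$ with $c'=\bigl((1-\rho)(\kappa-1)+1\bigr)/\kappa$, and for $\kappa=t+1$ a direct computation gives $1-tc'=\bigl(1-t^2(1-\rho)\bigr)/(t+1)$, which is positive provided $\rho>1-1/t^2$.

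Setting $W_i=N_i\cup B_i$, a union bound over $i$ leaves at least $(1-tc')|\cE|$ ``persistent'' edges of $\cE$ whose endpoint in every $V_i$ lies in $W_i$. Since each $v\in W_i$ has $\cE$-degree at least $d_1/\kappa$ and the $\cE$-degrees across $V_i$ sum to $|\cE|$, we get $|W_i|\leq\kappa em/\delta_1$. Enlarging each $W_i$ to a set of size exactly $\kappa em/\delta_1$ (which fits in $V_i$ since $\kappa e/\delta_1\leq 1$ under (\ref{eq:e<})) and invoking the $\epsilon$-homogeneity inequality (\ref{eq:homog}) gives
$$
(1-tc')\,e \;\leq\; \Bigl(\frac{\kappa e}{\delta_1}\Bigr)^t + \epsilon\frac{\kappa e}{\delta_1}.
$$
Dividing by $e$ and rearranging yields $e^{t-1}\geq\delta_1^t\bigl(1-t^2(1-\rho)\bigr)/(t+1)^{t+1}-O(\epsilon)$. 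Picking $\rho$ so that $t^2(1-\rho)=1-(1-\alpha/2)^{t-1}$ makes this right-hand side strictly exceed $(1-\alpha)^{t-1}\delta_1^t/(t+1)^{t+1}$ (the $O(\epsilon)$ slack being absorbed for $n$ large enough since $\epsilon\to 0$), contradicting the hypothesis (\ref{eq:e<}).

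Granted the lemma, take $s=\lceil\log(e/e_{BH})/\log(1/\rho)\rceil$, a constant independent of $N$. After $s$ rounds the branch obtained by always following the winning subprocedure has at most $\rho^s eN\leq e_{BH}N$ errors and is correctly decoded by the Bilu--Hoory procedure. The $s$ parallel-decoding rounds cost $O(t^s N)=O(N)$ in total, the $t^s=O(1)$ Bilu--Hoory invocations contribute $O(N\log N)$, and selecting the candidate nearest to $y$ needs another $O(N)$, for a grand total of $O(N\log N)$. The main subtlety is the interplay between $\rho$, $\kappa$, and the slack $(1-\alpha)$: the value $\kappa=t+1$ that optimizes the one-shot contradiction collapses the union-bound coefficient $1-tc'$ to zero at $\rho=1$, and the $(1-\alpha)$ slack in (\ref{eq:e<}) is precisely what permits $\rho$ to be pushed slightly below $1$ while keeping the contradiction alive.
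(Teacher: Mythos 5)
Your proposal is correct and follows essentially the same route as the paper: partition the vertices of each part into $G_i,N_i,B_i$ by what the threshold decoder does, argue by contradiction that if no subprocedure helps then the edges not touching any $G_i$ form a dense subhypergraph on small vertex sets $S_i$ with $|S_i|\le\kappa em/\delta_1$, and then invoke $\epsilon$-homogeneity with $\kappa=t+1$ to get a contradiction with \eqref{eq:e<}. The one genuine difference is worth noting: the paper works with $\rho=1$ (the hypothesis that every subprocedure \emph{strictly increases} the error count), derives the contradiction, and then appeals somewhat tersely to ``choosing $\alpha'<\alpha$'' to conclude a constant-fraction reduction. You instead carry a parameter $\rho<1$ through the computation from the start, obtaining $|\cE(G_i)|\leq c'|\cE|$ with $c'=\bigl((1-\rho)(\kappa-1)+1\bigr)/\kappa$ in place of $1/\kappa$, and then show that the $(1-\alpha)$ slack in \eqref{eq:e<} lets one pick $\rho$ strictly below $1$ (but above $1-1/t^2$, so that $1-tc'>0$). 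This makes explicit the step the paper leaves implicit — the negation of ``strictly increases'' gives only ``does not increase,'' which does not by itself support iteration — so your write-up is in fact a more complete version of the same argument rather than a shortcut. The complexity accounting at the end ($t^s=O(1)$ branches, then Bilu--Hoory on each, then nearest-to-$y$ selection) matches the paper.
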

\begin{proof}
The theorem will follow if we show that at least one subprocedure
reduces the error count by a constant fraction. Indeed, in this case
a constant number of rounds of the above algorithm will reduce the
error count to any positive proportion of the designed distance whereupon
the remaining errors will be removed in $O(\log N)$ steps of Bilu-Hoory's
algorithm.

Assume toward a contradiction that {\em all} the $i$-th decoding 
subprocedures, $i=1\ldots t$, introduce more errors than they remove.
Let us introduce the following notation:
$|\cE|=eN, S_i=B_i\cup N_i$,
     $|S_i|=\sigma_i m$.    Note that since the minimum $\cE$-degree in
     $S_i$ is at least $d_1/\kappa$, we have
     \begin{equation}
       \label{eq:ke/d}
       \sigma_i\leq \kappa e/\delta_1.
     \end{equation}
Consider the subset of edges obtained from $\cE$ by removing all edges 
incident to ``good'' vertices $G_i$ for all $i$. We are left with a 
subhypergraph $H_\cE$ with vertex set~$S_i$, $i=1\ldots t$. 
Use Lemma~\ref{lem:|E|/k} (the first part) for all $i$ to argue that the total
fraction of edges in $H_\cE$ is at least $e(1-t/\kappa)$. Applying
the $\epsilon$-homogeneous property \eqref{eq:homog} gives
  $$e\left(1-\frac{t}{\kappa}\right)\leq \sigma_1\cdots\sigma_t
    +\epsilon\min_{1\le i<j\le t}(\sigma_i\sigma_j)^{1/2}.$$
%In the sequel we trade rigor for readability by omitting
%$\epsilon$-terms that can be made arbitrarily small.
%To remind the reader that inequalities actually come with an
%$\epsilon$-term we shall use $\leq_\epsilon$ instead of $\leq$.
Applying \eqref{eq:ke/d} we obtain
 $$e\left(1-\frac{t}{\kappa}\right)
   \leq \left(\frac{\kappa e}{\delta_1}\right)^t+
               \epsilon\frac{\kappa e}{\delta_1}.$$
This inequality does not hold (and therefore our assumption is false)
if
  \begin{equation}\label{eq:err}
     e <
  \delta_1^{t/(t-1)}\left(\frac{1-t/\kappa-\epsilon\kappa/\delta_1}
             {\kappa^t}\right)^{1/(t-1)}.
  \end{equation}
Taking $\kappa=t+1$, rewrite the expression in the brackets on the right as
  $$
     \Big(\frac 1{t+1}\Big)^{(t+1)/(t-1)}\Big(1-\frac{(t+1)^2\epsilon}
            {\delta_1}\Big)^{\frac 1{t+1}}.
  $$
By taking sufficiently large $n$ it is possible to make $\epsilon$
small enough so that for any given $\alpha'>0$ there holds
   $$(1-{(t+1)^2\epsilon}/
            {\delta_1})^{1/{t+1}}>1-\alpha'.$$ 
This means that (\ref{eq:err}) is satisfied for all
  $$
   e<(1-\alpha')\frac{\delta_1^{t/(t-1)}}{(t+1)^{(t+1)/(t-1)}}.
  $$
Finally, choosing $\alpha'<\alpha$ guarantees that at least one subprocedure
reduces the error count by a constant fraction.
\end{proof}

\medskip

We see that the upper bound on the number of correctable errors given by
Theorem~\ref{th:e<} is a constant proportion $\gamma $ of the designed
distance $\delta N$ \eqref{eq:dist-t}, 
where $\gamma=1/(t+1)^{(t+1)/(t-1)}.$ 
For example, for $t=3,4$  we get $\gamma=1/16$ and $1/14.2$, respectively. 

The next theorem provides a better estimate of $\gamma$ by refining
the above analysis. The way this is done is to rely on the full power
of Lemma \ref{lem:|E|/k} instead of its first part as above.
%The way this is done is not to omit the term $|\cE(N_i)|$ in the
%lower bound on $|\cE|$ given by \eqref{eq:kappa}.
  \begin{theorem}\label{th:e<2}
    For any $\alpha >0$, if the number of errors $eN$ is such that
    $$e \leq (1-\alpha)\delta_1^{t/(t-1)}\max_{\kappa\ge 2}\min_{0<\mu<1}
      f(\mu,\kappa)$$
with
  $$f(\mu,\kappa)=
   \frac{[1-t(1-\mu)/(\kappa
     -\mu)]^{1/(t-1)}}{\kappa^{t/(t-1)}[\mu+
(1-\mu)/(\kappa -1)]^{t/(t-1)}}$$
    they can be corrected in time $O(N\log N$).
  \end{theorem}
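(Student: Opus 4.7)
The plan is to follow the same overall structure as the proof of Theorem~\ref{th:e<} but to use both parts of Lemma~\ref{lem:|E|/k} together with a refined upper bound on $\sigma_i=|S_i|/m$.

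Assume for contradiction that every one of the $t$ subprocedures introduces more errors than it removes. With $\mu_i=|\cE(N_i)|/|\cE(N_i)\cup\cE(B_i)|\in[0,1)$, $g(\mu)=(1-\mu)/(\kappa-\mu)$, and $h(\mu)=\mu+(1-\mu)/(\kappa-1)$, the first improvement over Theorem~\ref{th:e<} is to sharpen $\sigma_i\le \kappa e/\delta_1$ to $\sigma_i\le (\kappa e/\delta_1)\,h(\mu_i)$. This follows from the minimum $\cE$-degrees $d_1/\kappa$ in $N_i$ and $(\kappa-1)d_1/\kappa$ in $B_i$ together with the identities $|\cE(N_i)|=\mu_iX_i$, $|\cE(B_i)|=(1-\mu_i)X_i$, where $X_i:=|\cE(S_i)|\le|\cE|$. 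The second improvement is to remove from $\cE$ the edges incident to each $G_i$: by the second part of Lemma~\ref{lem:|E|/k}, at least a fraction $1-\sum_i g(\mu_i)$ of $\cE$ survives, and these surviving edges all lie in $E(S_1,\ldots,S_t)$.

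Applying the $\epsilon$-homogeneity inequality \eqref{eq:homog} to $S_1,\ldots,S_t$, inserting the sharpened bound on $\sigma_i$, and absorbing the $\epsilon$-term into a $(1-o(1))$ factor as $n\to\infty$ (using $h(\mu)\le 1$ for $\kappa\ge 2$ to bound $\min_{i<j}\sqrt{h(\mu_i)h(\mu_j)}\le 1$), I obtain
$$e\ge(1-o(1))\,\delta_1^{t/(t-1)}\Bigl(\tfrac{1-\sum_ig(\mu_i)}{\kappa^t\prod_i h(\mu_i)}\Bigr)^{1/(t-1)}.$$
The right-hand side is the multivariate analogue of $\delta_1^{t/(t-1)}f(\mu,\kappa)$, to which it reduces along the diagonal $\mu_1=\cdots=\mu_t=\mu$. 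Taking $\kappa$ in the algorithm to be a maximizer of $\min_\mu f(\mu,\kappa)$ and choosing $\epsilon$ small enough that the $o(1)$-loss is below $\alpha$ will then produce the desired contradiction with the hypothesis, provided the multivariate infimum of the bracketed expression equals its infimum along the diagonal.

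The main obstacle is precisely this symmetrization step. I would carry it out via a Lagrange-multiplier argument on $\Phi(\mu_1,\ldots,\mu_t)=(1-\sum g)/\prod h$: the equations $\partial_{\mu_i}\Phi=0$ force $-g'(\mu_i)h(\mu_i)/h'(\mu_i)$ to be independent of $i$, and a direct calculation shows this quantity is a strictly monotone function of $\mu$ on $(0,1)$ whenever $\kappa>2$, hence all $\mu_i$ must coincide. A short boundary analysis handles the degenerate cases $\mu_i\to0$ or $\mu_i\to 1$; the value $\kappa=2$ is uninteresting since $h\equiv1$ makes the single-variable bound vacuous. If the Lagrange argument turns out to be messy, an alternative is to prove the componentwise comparison directly using a symmetric convexity or Schur-concavity argument, since both $1-\sum g(\mu_i)$ and $\prod h(\mu_i)$ are symmetric in the $\mu_i$.
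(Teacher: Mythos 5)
Your proposal follows the paper's own proof step for step: the sharpened bound $\sigma_i\le(\kappa e/\delta_1)\,h(\mu_i)$ obtained from $|\cE(N_i)|=\mu_i|\cE(S_i)|$, $|\cE(B_i)|=(1-\mu_i)|\cE(S_i)|$ and the minimum $\cE$-degrees; the refined edge count $\beta(H_\cE)\ge e\bigl(1-\sum_i g(\mu_i)\bigr)$ from the second part of Lemma~\ref{lem:|E|/k}; the $\epsilon$-homogeneity inequality applied to $S_1,\dots,S_t$; and the resulting multivariate lower bound on $e$ are exactly the paper's steps. (For the $\epsilon$-term the paper keeps the cruder $\sigma_i\le\kappa e/\delta_1$ rather than the sharpened one you use; the difference is absorbed into the $1-\alpha'$ factor either way, so this is cosmetic.)

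Where you go beyond the paper is the symmetrization. The paper simply asserts that ``optimizing on all possible values of $\mu_i$ gives $\mu_i=\mu$'' with no justification, so your attempt to supply one is the only genuinely new content. Your first-order computation is correct (and no Lagrange multiplier is actually needed since there is no constraint): $\partial_{\mu_i}\Phi=0$ forces $-g'(\mu_i)h(\mu_i)/h'(\mu_i)$ to equal the common value $1-\sum_j g(\mu_j)$, and this quantity equals $\tfrac{(\kappa-1)\bigl(\mu(\kappa-2)+1\bigr)}{(\kappa-2)(\kappa-\mu)^2}$, which is strictly increasing on $(0,1)$ for $\kappa>2$, so all interior critical points lie on the diagonal. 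However, note the remark right after the theorem in the paper: for the optimal $\kappa$ the diagonal minimizer is at $\mu=0$ or $\mu=1$, not at an interior critical point. So the ``short boundary analysis'' you defer is in fact the decisive step: one must compare the pure corners $\Phi(0,\dots,0)$ and $\Phi(1,\dots,1)$ (which the optimal $\kappa$ equalizes) against the mixed corners $\Phi(0,\dots,0,1,\dots,1)$ and, by induction on the number of free coordinates, the lower-dimensional faces. This comparison does hold for the relevant $\kappa>t$, but a Schur-concavity argument, which you mention as an alternative, is likely the cleanest way to make it rigorous. The paper itself leaves this entirely implicit, so your proposal is, if anything, more complete than the published proof.
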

  \begin{proof}
We proceed as in the previous theorem, assuming toward a contradiction
that each subprocedure increases the error count. 
%For each $i$ define 
%   $$
%      \mu_i=\frac{|\cE(N_i)|}{|\cE(N_i)|+|\cE(B_i)|}.
%  $$
Using the definition of $\mu_i$ given above, 
  $$
    |\cE(S_i)|=\frac{|\cE(B_i))|}{1-\mu_i}=
             \frac{|\cE(N_i)|}{\mu_i}.
  $$
Recall that the subhypergraph $H_\cE$ is formed of the edges all of whose
vertices are in $S_i.$ To count the total fraction of edges $\beta(H_\cE)$ 
in the subhypergraph $H_\cE$ we employ Lemma~\ref{lem:|E|/k}:
  $$
    \beta(H_\cE)\ge e\Big(1-\sum_{i=1}^t\frac{1-\mu_i}{\kappa-\mu_i}\Big).
  $$
The $\cE$-degree of a vertex in $S_i$ (resp., $B_i$) is at least $d_1/\kappa$ 
(resp., $d_1(\kappa-1)/\kappa$). Hence
  $$
  |S_i|=|B_i|+|N_i|\le \cE(N_i)\frac\kappa{d_1}+ 
        \cE(B_i)\frac{\kappa(1-\mu_i)}{d_1(\kappa-1)}
   $$
  $$
    \leq \frac{\kappa e}{d_1}\Big(\frac{1-\mu_i}{\kappa-1}+\mu_i\Big)N.
  $$
Using the last two inequalities in (\ref{eq:homog}), we obtain
  $$
   e\Big(1-\sum_{i=1}^t\frac{1-\mu_i}{\kappa-\mu_i}\Big)
    \leq \Big(\frac{\kappa e}{\delta_1}\Big)^t\prod_{i=1}^t
          \Big(\frac{1-\mu_i}{\kappa-1}+\mu_i\Big)+\epsilon\frac{\kappa e}
               {\delta_1}.
   $$
To contradict this, let
  $$
   e<
       \Big(\frac{\delta_1}\kappa\Big)^{t/(t-1)}
        \biggl\{\frac {1-\sum_{i=1}^t\frac{1-\mu_i}{\kappa-\mu_i}-
         \epsilon\kappa/\delta_1}{ \prod_{i=1}^t
          (\frac{1-\mu_i}{\kappa-1}+\mu_i)}\biggr\}^{1/(t-1)}.
  $$
We again bound the terms that involve $\epsilon$ from below by a 
multiplicative term $1-\alpha'.$ 
Optimizing on all possible values of $\mu_i$ gives
$\mu_i=\mu$ for all $i=1\ldots t$, whereupon the expression on the right
can be replaced by $(1-\alpha)\delta_1^{t/(t-1)}f(\mu,\kappa).$
The proof is thus complete.
  \end{proof}

Numerically, the first values of the decoding radius $\rho$ given by
Theorem~\ref{th:e<2} are
$$\rho\geq \frac{\delta_1^{3/2}}{5.94}\;\;\text{for $t=3$}
\hspace{1cm}
\rho\geq \frac{\delta_1^{4/3}}{6.46}\;\;\text{for $t=4$}$$
attained for $\kappa$ satisfying $(\kappa-1)^{-t}=1-t/\kappa$ and $\mu=0$
or 1.

Can one obtain better bounds for the decoding radius~?
In principle, it is possible to obtain further improvements by introducing
{\em multiple} thresholds instead of the single decoding threshold
$\theta=d_1/\kappa,$ and approach $\rho=\delta/2$ by increasing their
number.  However we shall only be able to claim that
using one of the multiple thresholds reduces the number of errors
for one of the subprocedures, but we shall not be able to discern
which decoding threshold achieves that. This will result in yet
another layer of parallelism, further increasing the value of the
constant in the decoding complexity. We will not pursue this
line of research further here. A remaining challenge is to decode
up to half the designed distance with an iterative decoding procedure
of reasonable complexity.

\renewcommand\baselinestretch{0.9}
{\footnotesize
\providecommand{\bysame}{\leavevmode\hbox to3em{\hrulefill}\thinspace}


\begin{thebibliography}{10}
\bibitem{bar06}
A. Barg and G. Z\'emor, ``Distance properties of expander codes,''
{\em IEEE Trans. Inform. Theory}. vol. 52, no. 1, pp. 78--90, 2006.

\bibitem{bil04}
Y. Bilu and S. Hoory, ``On codes from hypergraphs,''
{\em European Journal of Combinatorics\/}, vol. 25, pp. 339--354, 2004.

\bibitem{bou99}
J.~Boutros, O.~Pothier, and G.~Z{\'e}mor, ``Generalized low density
  $(${T}anner\/$)$ codes,'' {\em Proc. IEEE ICC\/}, Vancouver, Canada,
vol.~1, pp.~441--445, 1999.

\bibitem{len99}
M.~Lentmaier and K.~Sh. Zigangirov, ``On generalized low-density
  parity-check codes based on {H}amming component codes,'' 
  {\em IEEE Communications
  Letters\/}, vol. 3, no. 8, pp. 248--260, 1999. 

\bibitem{sip96}
M.~Sipser and D.~A. Spielman, ``Expander codes,'' 
{\em IEEE Trans. Inform. Theory,} vol. 42, no. 6, pp. 1710--1722, 1996.

\bibitem{zem01}
G.~Z{\'e}mor, ``On expander codes,'' {\em IEEE Trans. Inform. Theory\/},
  vol. 47, no. 2, pp. 835--837, 2001.
\end{thebibliography}
\end{document}